%
%
%
%
%
%
%
\documentclass[%
reprint,
superscriptaddress,
 amsmath,amssymb,
 aps,
]{revtex4-1}
\usepackage{float}
\usepackage{subcaption}  

\usepackage{mathtools}
\usepackage{dcolumn}
\usepackage{bm}
\usepackage{physics}

\newcounter{conditioncount} 

\usepackage{booktabs}
\usepackage{colortbl}
\usepackage{diagbox}
\usepackage{array}
\usepackage{amsmath}
\usepackage{amssymb}
\usepackage{amsthm}
\usepackage{graphicx}
\usepackage{algorithmic}
\usepackage{mathrsfs}
\usepackage{braket}
\usepackage{xcolor}
\usepackage{tikz}
\usepackage[roman]{complexity}
\usepackage[caption=false]{subfig}





\newtheorem{theorem}{Theorem}

\usepackage{pdfpages}
\usepackage{pgffor}
\makeatletter
\AtBeginDocument{\let\LS@rot\@undefined}
\makeatother

\begin{document}

\title{The Stabilizer Bootstrap of Quantum Machine Learning with up to 10000 qubits}

\author{Yuqing Li}
\email{yli@pitt.edu}
\affiliation{Department of Computer Science, University of Pittsburgh, Pittsburgh, PA 15260, USA}

\author{Jinglei Cheng}
\email{jic373@pitt.edu}
\affiliation{Department of Computer Science, University of Pittsburgh, Pittsburgh, PA 15260, USA}

\author{Xulong Tang}
\email{xulongtang@pitt.edu}
\affiliation{Department of Computer Science, University of Pittsburgh, Pittsburgh, PA 15260, USA}

\author{Youtao Zhang}
\email{youtao@pitt.edu}
\affiliation{Department of Computer Science, University of Pittsburgh, Pittsburgh, PA 15260, USA}

\author{Frederic T. Chong}
\email{chong@cs.uchicago.edu}
\affiliation{Department of Computer Science, University of Chicago, Chicago, IL 60637}

\author{Junyu Liu}
\email{junyuliu@pitt.edu}
\affiliation{Department of Computer Science, University of Pittsburgh, Pittsburgh, PA 15260, USA}

\date{\today}

\begin{abstract}

Quantum machine learning is considered one of the flagship applications of quantum computers, where variational quantum circuits could be the leading paradigm both in the near-term quantum devices and the early fault-tolerant quantum computers. However, it is not clear how to identify the regime of quantum advantages from these circuits, and there is no explicit theory to guide the practical design of variational ans\"{a}tze to achieve better performance. We address these challenges with the \texttt{stabilizer bootstrap}, a method that uses stabilizer-based techniques to optimize quantum neural networks before their quantum execution, together with theoretical proofs and high-performance computing with 10000 qubits or random datasets up to 1000 data. We find that, in a general setup of variational ans\"{a}tze, the possibility of improvements from the stabilizer bootstrap depends on the structure of the observables and the size of the datasets. The results reveal that configurations exhibit two distinct behaviors: some maintain a constant probability of circuit improvement, while others show an exponential decay in improvement probability as qubit numbers increase. These patterns are termed \emph{strong stabilizer enhancement} and \emph{weak stabilizer enhancement}, respectively, with most situations falling in between. Our work seamlessly bridges techniques from fault-tolerant quantum computing with applications of variational quantum algorithms. Not only does it offer practical insights for designing variational circuits tailored to large-scale machine learning challenges, but it also maps out a clear trajectory for defining the boundaries of feasible and practical quantum advantages.

\end{abstract}

\maketitle

\section{Introduction}

Quantum machine learning (QML) represents a powerful approach with demonstrated applications across multiple scientific domains, from protein folding and drug discovery in biology \cite{wang2024comprehensive,batra2021quantum,cao2018potential} to molecular structure optimization in chemistry \cite{peruzzo2014variational}. The performance of QML algorithms depends heavily on initial conditions, which is a characteristic shared with classical machine learning. 
Research shows that optimal initialization of training circuits can reduce convergence time and improve final results \cite{glorot2010understanding,he2015delving}. 
The classical bootstrap~\cite{el2014solving}, a process for searching and optimizing certain parameters under constraints, serves as a crucial step in developing effective variational quantum algorithm initial parameters \cite{ravi2022cafqa}.

In general, recent research has established several parameter initialization approaches for QML, including tensor network methods \cite{rudolph2023synergistic}, Gaussian initialization techniques \cite{zhang2024escaping}, matrix product state optimization \cite{dborin2021matrix}, and deep neural network integration \cite{shi2024avoiding}. These initialization methods have advanced QML performance through reduced training epochs, faster loss function convergence, and improved final accuracy. The prior work \cite{ravi2022cafqa} introduced CAFQA, a novel approach that searches Clifford space with Bayesian optimization to identify optimal quantum states for Variational Quantum Eigensolver (VQE) tasks, naturally leveraging concepts from fault-tolerant quantum computing towards practical applications. In various experiments, CAFQA surpasses the traditional chemical approach of finding a suitable computational basis state known as Hartree-Fock (HF) \cite{hartree1935self} initialization and even could show $2.5 \times$ faster convergence than HF for small molecules. While these advances show important progress, current research remains limited to small-scale systems and has no discussion of the dependence on involved dataset. In fact, since the optimization uses Clifford circuits that can be simulated classically due to the famous Gottesman-Knill theorem \cite{gottesman1998heisenberg}, one could in principle consider extremely large-scale simulations and fruitful data sets beyond the capability of state-vector simulators.

In our work, we address such challenges by utilizing high-performance computing to advance QML simulations to the next level, a process we refer to as the \texttt{stabilizer bootstrap} \footnote{The term \emph{stabilizer bootstrapping}, similar to our term, is used in \cite{chen2024stabilizer} but only for quantum tomography. }. To set up our problem, we consider a general class of variational circuits with varying measurement observables and entanglement structures. We employ stabilizer circuits consisting of layers of $R_y$ gates and CNOT gates to assess algorithm performance. Specifically, we investigate different CNOT layer structures, including linear and reverse-linear entanglement configurations. These circuits correspond to continuous variational quantum circuits with special variational angles ${0, \frac{\pi}{2}, \pi, -\frac{\pi}{2}}$. We study how much improvement can be achieved for a given trivial initial state by selecting this special set of angles, a concept we term \emph{stabilizer enhancement}. The \texttt{stabilizer bootstrap} process involves sampling all possible combinations of ${0, \frac{\pi}{2}, \pi, -\frac{\pi}{2}}$ to find the optimal solution that minimizes a given loss function derived from quantum measurements of these circuits.

Regarding operator designs in our QML problems, we evaluate the algorithm's performance under observables composed exclusively of Pauli-$X$ and Pauli-$Z$ operators, varying the proportion of Pauli-$X$ operators in the observables. The most extreme cases involve either all $X$ or all $Z$, while intermediate mixtures are quantified by the ratio $r$, representing the proportion of $X$ operators in the observable string. The purpose of our \texttt{stabilizer bootstrap} program is to scale high-performance computing to large numbers of qubits and large datasets, achieving simulations with up to 10000 qubits and datasets of size up to 1000. Our optimization process for the \texttt{stabilizer bootstrap} in QML follows two critical phases. The first phase involves point sampling, while the second phase focuses on parameter optimization. This optimization phase comprises two key components: a random forest model and a greedy acquisition function, both designed for discrete parameter spaces.

Our experimental results highlight that the sampling phase is particularly crucial—without non-trivial sample points, the parameter optimization cannot proceed effectively. Intuitively, one might expect the potential for such enhancement to decay exponentially with the number of qubits, given the exponentially large parameter space. This reinforces the understanding that quantum circuits cannot be enhanced purely classically when approaching the regime of quantum advantage.
However, we find that it is not always true. There do exist cases where one can always enhance the original trivial state with constant probability, which we call it \emph{strong stabilizer enhancement}. We can also call the exponential decay with maximal exponent as \emph{weak stabilizer enhancement}. The most extreme cases (both strong and weak) might happen at $r=0$ or $r=1$, and most cases are between those two $0<r<1$, which, in a sense, define a practical version of classical simulatability and indicate a boundary towards possible quantum advantages. The most extreme strong and weak cases can be theoretically proven. Although we expect that there might be a more elegant proof using group theory and stabilizers (where we leave it for future research), we show it directly with proof by induction. Cases in the middle could be justified using our bootstrap program with large-scale simulations, where we can actually measure how \emph{exponential} the dependence could be by the \emph{critical exponent} $\nu$, where the enhancement decays as $1/n^\nu$ with the number of qubits $n$. The exponential decay cases correspond to $\nu \sim n/\log n$ (where in physics language it is called a \emph{constant gap}), while the strong enhancement correspond to $\nu =0$. Moreover, we study how large the enhancement will be depending on the size of datasets, where we encode data randomly in the Clifford circuit and use a joint loss function. Numerical evidence shows that it is likely to be exponentially decaying with the data size.   

The overall logic of our paper is summarized in Figure \ref{fig:overview}, where technical details are given in the Appendix, including background reviews, theoretical proofs, and experimental details.  

\begin{figure}
    \centering
    \includegraphics[width=\linewidth]{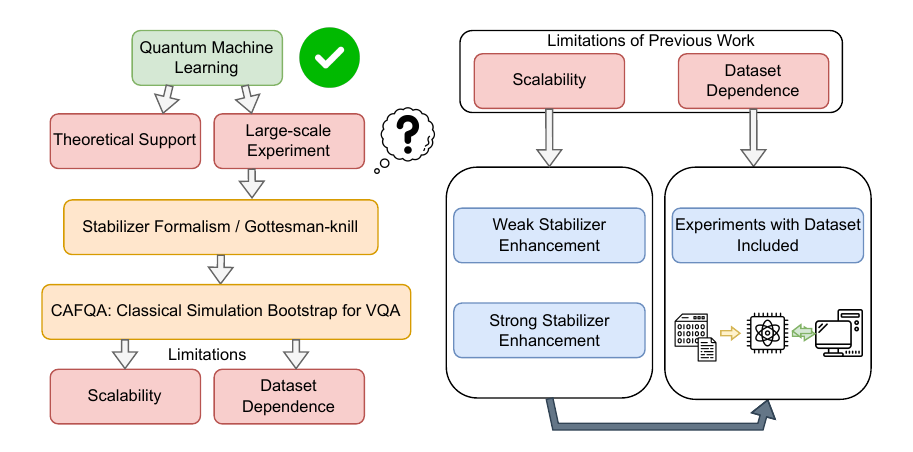}
    \caption{Overview of our paper. Firstly, we present results for single-layer stabilizer ans\"{a}tze under basic conditions, which we denote as \textit{strong stabilizer enhancement} and \textit{weak stabilizer enhancement}. Finally, we present our experimental results for the \texttt{stabilizer bootstrap} with datasets, analyze the impact of dataset size, and demonstrate outcomes for systems with 10000 qubits or 1000 data sizes maximally.}
    \label{fig:overview}
\end{figure}
\section{Results}
\label{results}
In this section, we present our work's core results, including key conclusions and primary experimental results. 

The stabilizer bootstrap employs Bayesian optimization to efficiently explore the space of Clifford operations and optimize parameters. 
To provide a clearer understanding of this approach, we first outline the fundamental principles of Bayesian optimization and then highlight the importance of sampling. 
Bayesian optimization consists of two main phases: sampling and optimization.
The sampling phase establishes the initial set of points for model training. 
The optimization phase then proceeds through three sequential steps based on sampled points: (1) Point Selection: Choose the next points to sample based on the acquisition function; (2) Evaluation: Calculate the objective function values for these points. (3) Model Update: Refine the surrogate model with the newly evaluated points. 
According to optimization theory, if we fail to sample enough useful or nontrivial points, subsequent optimization based on these samples will become challenging.
Consider an extreme case: if all sampled points are trivial (e.g., zero in this experiment), the acquisition function will continue selecting the next points far away from current points randomly until it identifies some promising ones. 
The detailed reason behind this can be found in Appendix \ref{supp:A}. 
Besides, since sampling is easier to accelerate by CPU parallelization, we prefer to allocate more computational resources to sampling rather than optimization to reduce the overall runtime.
Therefore, in our work, we pay much attention to improve sampling efficiency. 
How can we quantify sampling efficiency? 
Our work employs the stabilizer for enhancement, where the outcomes are restricted to the set \(\{0, 1, -1\}\). 
Through extensive experiments, we observed that in large-scale qubit systems, sampling outcomes of 1 or $-1$ become increasingly rare, with the results most likely being 0. 
Therefore, we define the probability of nontrivial sampling—specifically, the probability of obtaining a result of 1 during sampling—as a key metric for sampling efficiency (Due to symmetry, we assume the probabilities of obtaining 1 and $-1$ are equal, and therefore, we focus solely on the probability of obtaining 1 for our experiments). 

We begin our discussion about sampling efficiency from the easiest single-layer stabilizer ans\"{a}tze, with only $R_y$ gates. 
We consider two variations of such Clifford ans\"{a}tz with different entanglement structures as shown in FIG.\ref{fig:clifford_ansatz}. 
Note that our discussion here is not yet related to the database. 
All our experiments are conducted starting from the initial quantum state $| 0 \ldots 00 \rangle$.

\begin{figure}[H]
	\centering
	\includegraphics[width=\linewidth]{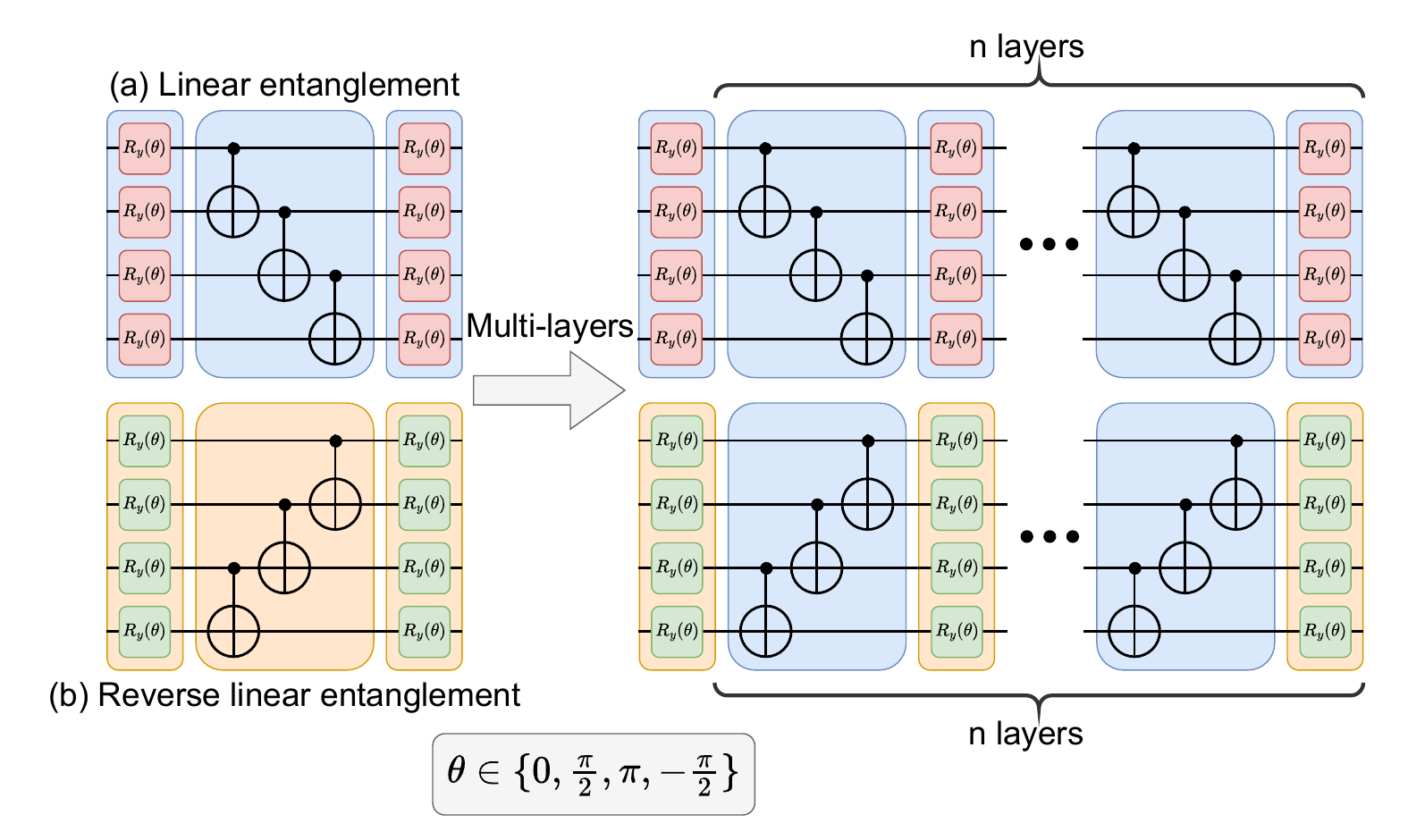}
	\caption{Clifford ans\"{a}tze. For simplification, We refer to the upper (a) entanglement structure as linear entanglement and the below (b) entanglement structure as reverse linear entanglement.}
	\label{fig:clifford_ansatz}
\end{figure}

In accordance with the observables and entanglement structures depicted in FIG. \ref{fig:clifford_ansatz}, we consider four distinct conditions: (1) measurement observables is $X..X$, and the entanglement structure is linear entanglement; (2) measurement observables is $Z..Z$, and the entanglement structure is linear entanglement; (3) measurement observables is $X..X$, and the entanglement structure is reverse linear entanglement; (4) measurement observables is $Z..Z$, and the entanglement structure is reverse linear entanglement. 
We demonstrate the probability of measuring 1 or $-1$ under these four conditions in the TABLE \ref{tab:prob}. 
The corresponding theorems could be found in Appendix \ref{supp:B}.

\begin{figure}[h!]
    \centering
    \includegraphics[width=0.9\linewidth]{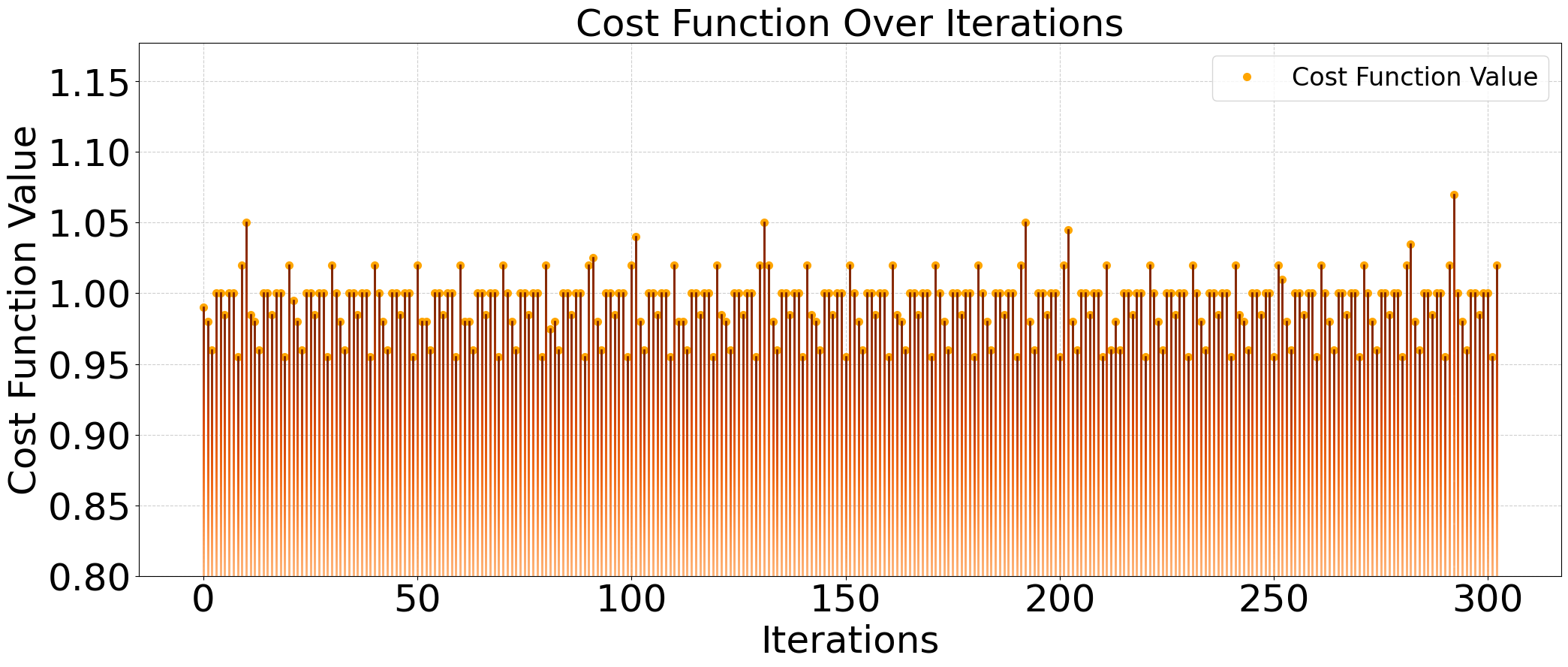}
\caption{Result for 10000 qubits with 200 data.}
    \label{fig:1wqubits}
\end{figure}

To conduct experiments for this paper, we generate binary classification datasets with the \texttt{sklearn.datasets.make\_classification } function up to 1000 data. We construct the stabilizer with the \texttt{Stim} library and employ \texttt{hypermapper} for Bayesian optimization. Our loss function is the mean squared error (MSE). Here, we demonstrate the experiment up to 10000 qubits using a dataset with 200 data in FIG. \ref{fig:1wqubits}, representing our hardware's maximum capacity. All experiments were performed on 8 computing nodes running Debian GNU/Linux 11, each equipped with an AMD EPYC processor featuring 16 cores (32 logical processors) and 251 GB of RAM.

\renewcommand{\arraystretch}{2.5} 

\begin{table*}[ht]
\caption{\textbf{Probability of measuring $1$ or $- 1$}}
\centering
\begin{tabular*}{0.65\textwidth}{@{\extracolsep{\fill}}ccc@{}}
\hline
& \multicolumn{2}{c}{\raisebox{0.9em}{\textbf{\underline{\quad \qquad Observables \qquad \quad}}}} \\ [-2.5em]
\textbf{Entanglement structures} & $\displaystyle X \ldots X$ & $\displaystyle Z \ldots Z$ \\ [-0.5em]
\hline
Linear entanglement structure in FIG. \ref{fig:clifford_ansatz}
& \refstepcounter{conditioncount}\label{condition1} $\displaystyle \frac{1}{2^{\lceil n/2 + 1 \rceil}}$
& \refstepcounter{conditioncount}\label{condition2} $\displaystyle \frac{1}{4}$ \\

Reverse linear entanglement structure in FIG. \ref{fig:clifford_ansatz}
& \refstepcounter{conditioncount}\label{condition3} $\displaystyle \frac{1}{4}$ 
& \refstepcounter{conditioncount}\label{condition4} $\displaystyle \frac{1}{2^{\lceil n/2 + 1 \rceil}}$ \\
\hline
\end{tabular*}
\label{tab:prob}
\end{table*}

Under condition (\ref{condition2}) and (\ref{condition3}), the probability is always constant, allowing us to extend our algorithms to large-scale qubit systems easily. 
Therefore, we define this phenomenon as strong stabilizer enhancement, as the stabilizer can readily find good initial points within this space. 
We argue that under strong stabilizer enhancement, the quantum advantage may be limited or even vanish due to the high efficiency of classical simulations.

Under condition (\ref{condition1}) and (\ref{condition4}), regardless of parity, we treat it as exponential decay and refer to it as weak stabilizer enhancement since the stabilizer can efficiently identify a good initial state when the number of qubits is large.
In the Appendix \ref{supp:B}, we provide a rigorous proof of the condition (\ref{condition2}) and condition (\ref{condition4}) using mathematical induction. 
Due to symmetry, if we simultaneously modify the entanglement structure and replace the \( Z \) operator with the \( X \) operator, the probability remains unchanged. 
Consequently, we could derive the result for condition (\ref{condition1}) and condition (\ref{condition3}).

For simplicity, we primarily focus on the reverse linear entanglement structure in FIG.\ref{fig:clifford_ansatz} in the following section. 
We have proved the probability of nontrivial sampling for commonly used $Z$-string observables and $X$-string observables. 
We are also interested in more general conditions, and here, we would extend our discussion to more general observables. 
The experimental results for the observables consisting of $X$ operators and $Z$ operators and Clifford stabilizer are shown in FIG.\ref{fig:clifford_ansatz}.
We note that the probability of nontrivial sampling is always between $0$ and $\frac{1}{4}$. 
It is a conclusion drawn from various experiments, and detailed experiment results can be found in Appendix \ref{supp:C}.

\begin{figure}[H]
    \centering
    \includegraphics[width=0.7\linewidth]{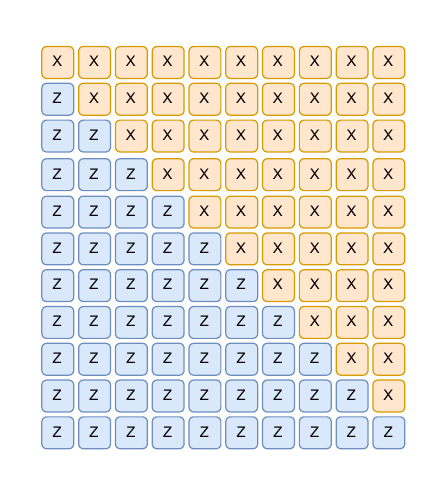}
    \caption{Domain-wall observables.}
    \label{fig:domian-wall observables}
\end{figure}

We are particularly interested in the relationship between this probability and the structure of the observables. 
Specifically, we would like to understand how this probability decays with the number of qubits under different observable structures.
To investigate this, we conduct experiments to determine the decay rate under the domain-wall observable structure, $ZZ...XXX$, with different proportions of $X$ operators in the observable in FIG. \ref{fig:domian-wall observables}. 
Considering the parity issue shown in condition (\ref{condition4}), we analyze two cases separately: one for scenarios where the number of qubits is odd and the other for when it is even. 
The results are included in FIG.\ref{fig:even_prob} and FIG.\ref{fig:odd_prob}.

We hypothesize there is a polynomial decay between exponential decay and constant. To deepen our understanding of this phenomenon, we try to use the following function to approximate the decay carves in all observables:
$$p (r, n) = \frac{1}{4 n^\nu } \label{equ:v},$$
where $r$ represents the proportion of $X$ operators in the observable and $n$ is the number of qubits. 
We argue that with different $r$, $\nu$ is also different. For example, under conditions of strong stabilizer enhancement and $n$ is even, we have:
$$\frac{1}{4 n^\nu } = \frac{1}{4} \rightarrow \nu = 0,\label{equ:0}$$
and under conditions of weak stabilizer enhancement, we have:
$$\frac{1}{4 n^\nu } = \frac{1}{2^{\frac{n}{2} + 1}} \rightarrow \nu = \frac{\left(
\frac{n}{2} - 1 \right) \log 2}{\log n}.\label{equ:logn}$$
Also, we demonstrate detailed exponents in FIG. \ref{fig:exponents}. 
Some data points extend beyond the red line range, primarily as a result of experimental errors. 
Apart from this, the decay rates of other cases consistently fall between the exponential decay and the constant value.

\begin{figure}[H]
    \centering
    \begin{subfigure}[b]{0.49\linewidth}
        \centering
        \includegraphics[width=\linewidth]{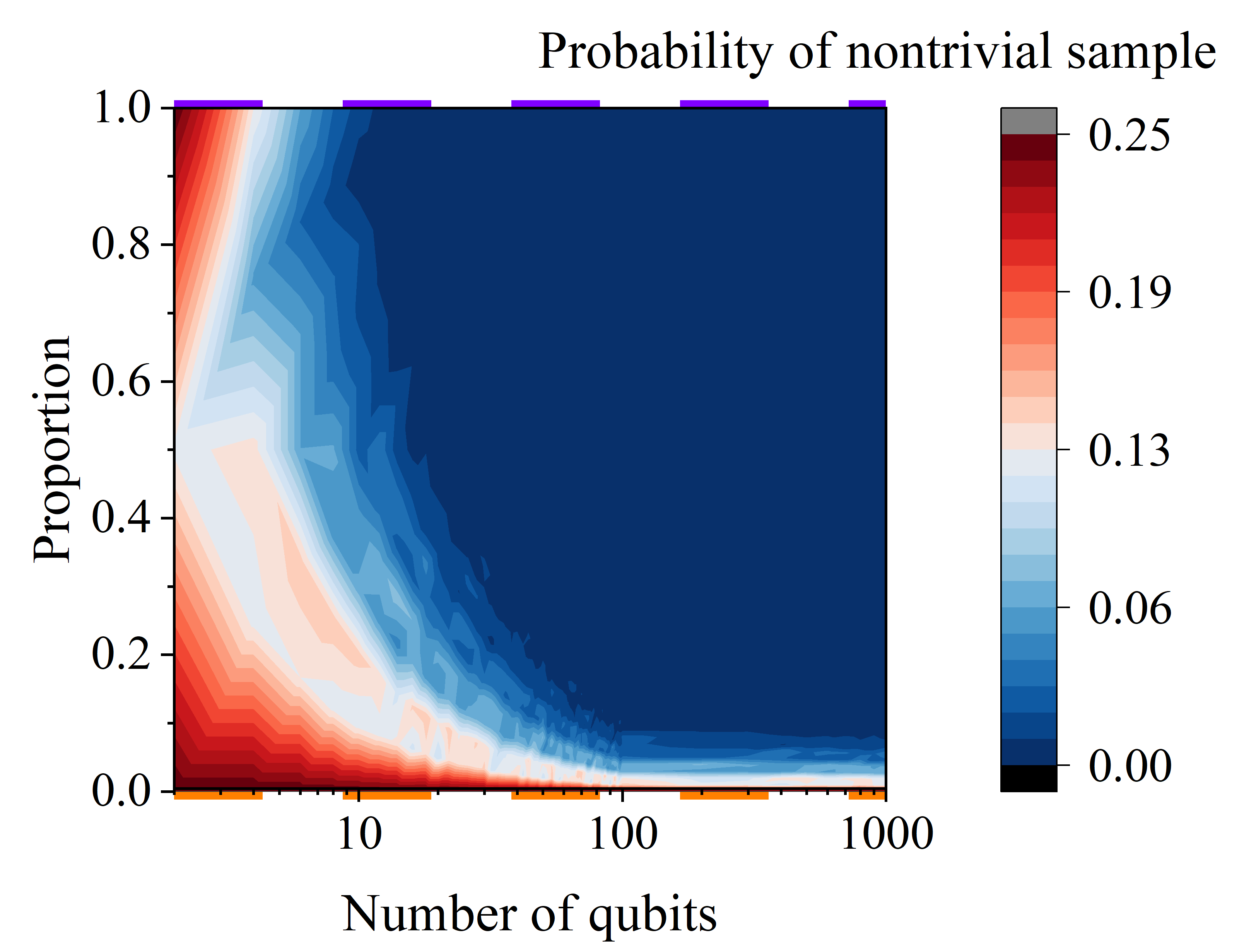}
        \caption{The number of qubits is even.}
        \label{fig:even_prob}
    \end{subfigure}
    \hfill
    \begin{subfigure}[b]{0.49\linewidth}
        \centering
        \includegraphics[width=\linewidth]{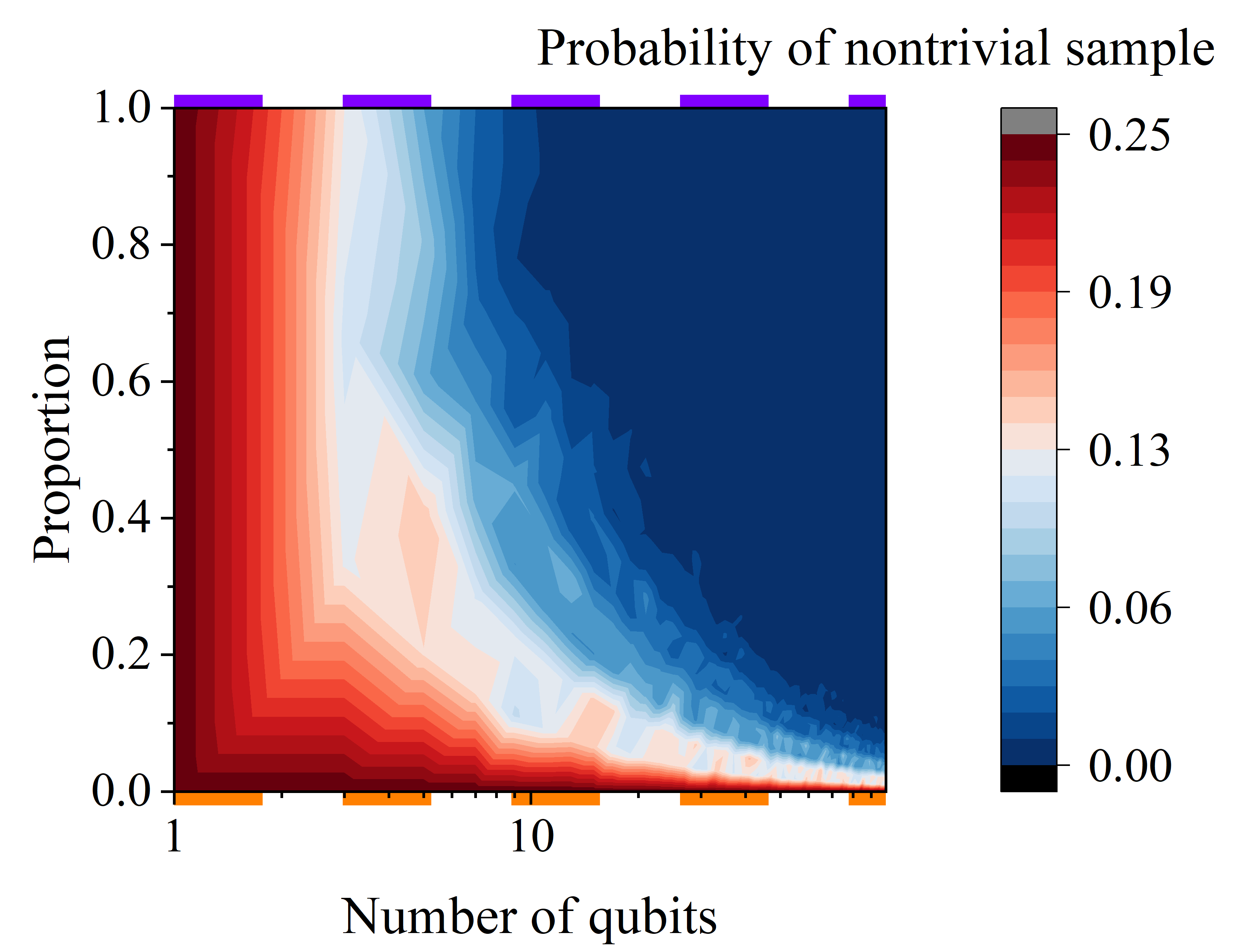}
        \caption{The number of qubits is odd.}
        \label{fig:odd_prob}
    \end{subfigure}
    \caption{Probability of nontrivial sampling for even and odd number of qubits. The $Z$-axis is the probability of nontrivial sampling, and the $Y$-axis refers to the proportion of $X$ operators in the observable. Therefore, at the points where \( x = 0 \), drawn by the orange line, the conditions degenerate into the strong stabilizer enhancement as stated in condition (\ref{condition2}). Similarly, at the points where \( x = 1 \), drawn by the purple line, the conditions degenerate into the weak stabilizer enhancement as described in condition (\ref{condition4}).}
    \label{fig:prob_sampling}
\end{figure}

Moreover, we present the results of the \texttt{stabilizer bootstrap} with the dataset and discuss the relationship between optimization efficiency and the size of the database. 

Our experiments include dataset sizes ranging from 100 to 1000 samples, and we set the number of features in each dataset equal to the number of qubits (1000). We utilize 3-layer stabilizer ans\"{a}tzes to improve the expressive capabilities of quantum circuits. Some other details including general numbers of layers are given in \ref{supp:C}. To assess the efficiency of our algorithm, we evaluate two key metrics: the minimum loss value achieved and the variance of loss values across all sampled points. 
A lower minimum loss value directly indicates better optimization performance. Additionally, higher variance during the sampling phase is beneficial for our search process, as it indicates the algorithm is exploring a diverse range of solutions. 
This exploration increases the probability of finding points with extreme loss values (both low and high), which is particularly valuable when seeking global minima, assuming the mean loss value is centered at zero. 
Our primary objective is to achieve the lowest possible loss value through this sampling and optimization process.

\begin{figure}[H]
    \centering
    \begin{subfigure}[b]{0.49\linewidth}
        \centering
        \includegraphics[width=\linewidth]{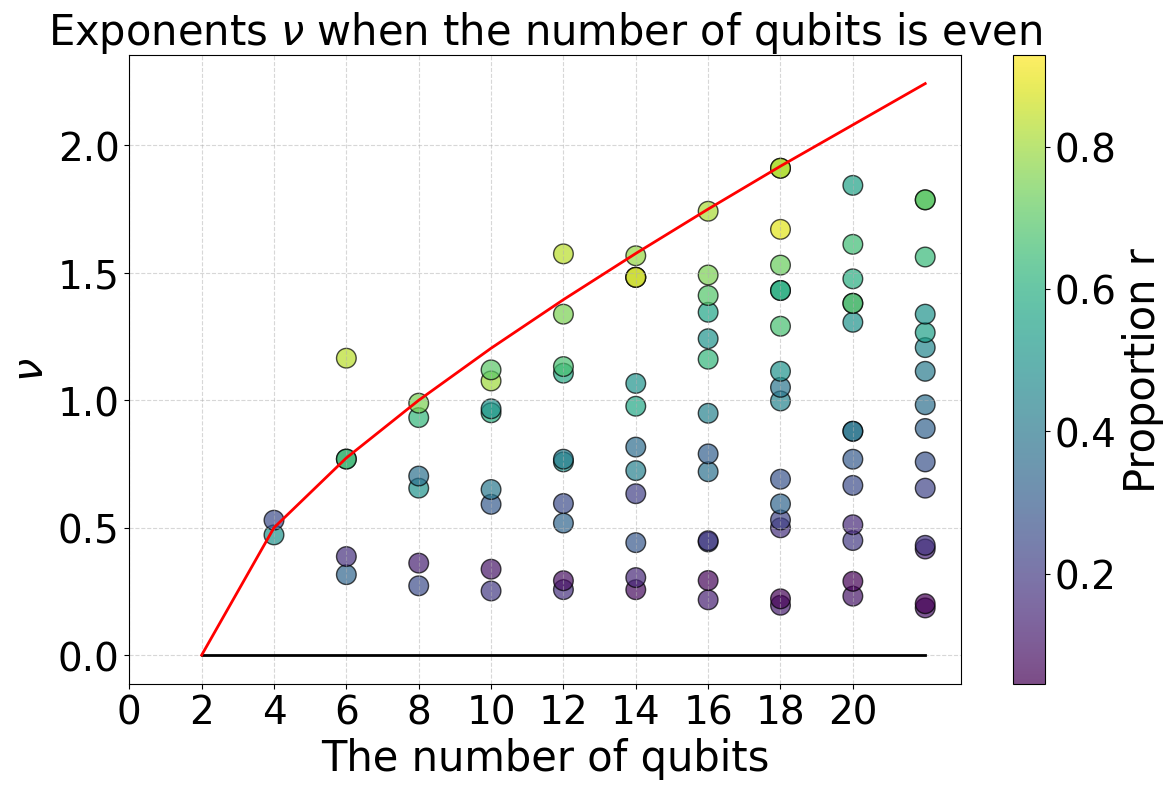}
        \caption{The number of qubits is even.}
     
    \end{subfigure}
    \hfill
    \begin{subfigure}[b]{0.49\linewidth}
        \centering
        \includegraphics[width=\linewidth]{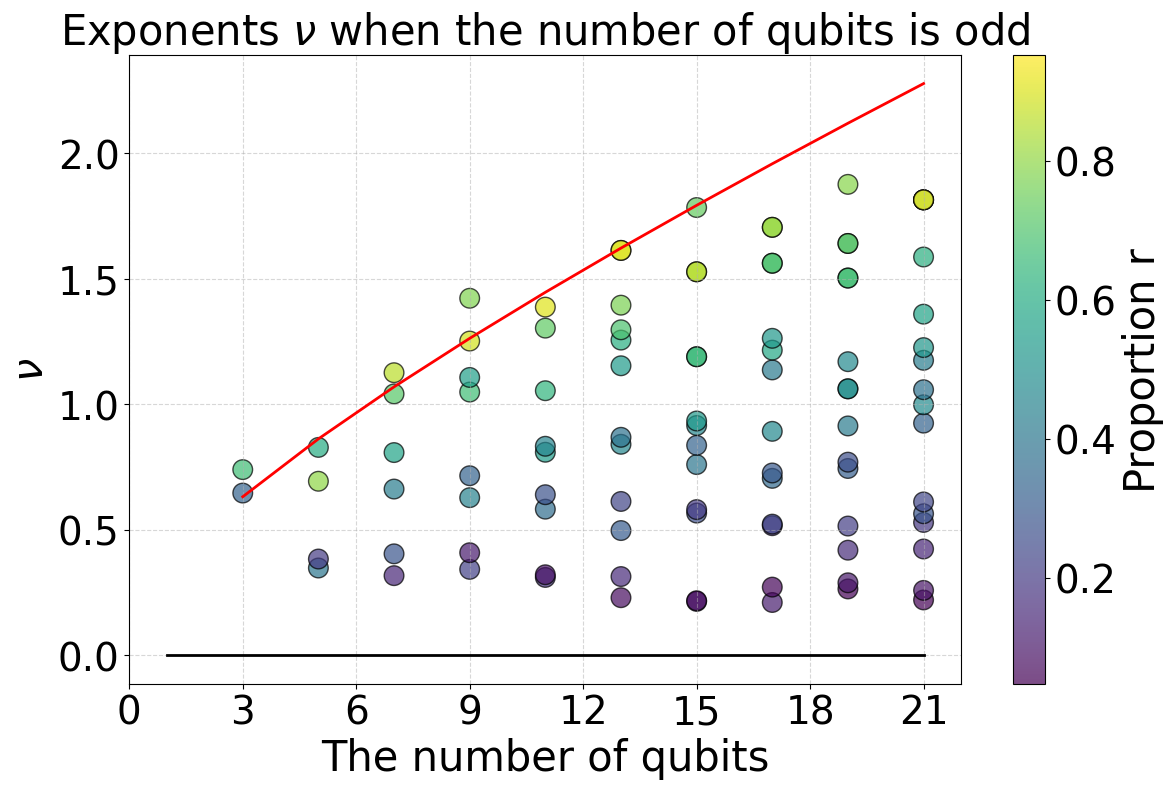}
        \caption{The number of qubits is odd.}
      
    \end{subfigure}
    \caption{Exponents for even and odd number of qubits. The Y-axis is the exponents, and the $X$-axis refers to the number of qubits. The red lines in these figures represent the exponent carve in Equation \ref{equ:logn} with $X$-string observables, while the black lines denote a constant value of \(\frac{1}{4}\) for $Z$-string observables in Equation \ref{equ:0}.}
    \label{fig:exponents}
\end{figure}

\begin{figure}
     \centering
     \includegraphics[width=\linewidth]{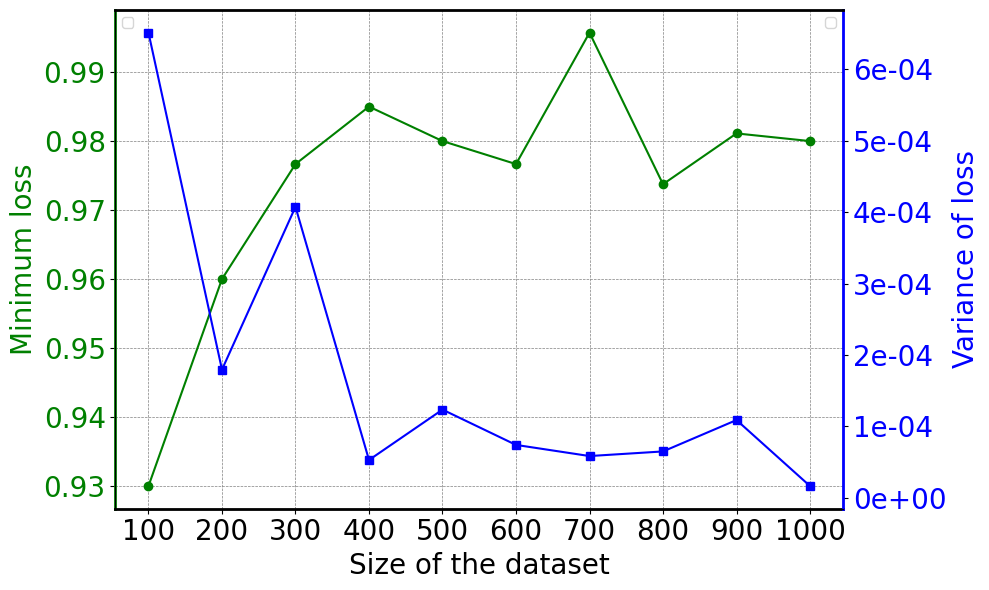}
     \caption{Search efficiency under different dataset sizes up to 1000 data.}
     \label{fig:experiment_dataset}
 \end{figure}
From FIG. \ref{fig:experiment_dataset}, we can draw the conclusion that as the dataset size increases, the search efficiency declines exponentially, characterized by higher loss values and reduced variance.

\bigskip 

\section{Discussion}
\label{discussion}

In this paper, we leverage the \texttt{stabilizer bootstrap} to enhance quantum machine learning (QML) circuits and analyze the challenges in the sampling stage of existing algorithms. Through extensive experiments, we examine the relationship between sampling efficiency and observables. Based on varying sampling efficiencies, we introduce the concepts of weak and strong stabilizer enhancement, demonstrating that our algorithm can scale to large qubit systems and larger datasets with the support of theoretical proofs (proof by induction) and high-performance computing. Additionally, we provide detailed mathematical proofs to substantiate key arguments. Our work not only establishes a practical framework for QML enhancement with numerical tests on systems of up to 10000 qubits but also proposes a novel methodology to quantify the spectrum between classical simulability and potential quantum advantages.

Our research also raises several new questions for QML enhancement. For instance, how can the general case for $0 < r < 1$ be rigorously proven, and how can this phenomenon be explained purely through the structure of the Clifford space? Furthermore, our current analysis of entanglement structures is limited to linear and reverse-linear configurations, which are \emph{dual} to each other by exchanging $X$ and $Z$ operators and interchanging strong and weak enhancements. This duality may be further explored using group theory. Additionally, future research could extend this analysis to explore the dependence on data encoding methods, structured datasets, the structures of more diverse variational circuits, the choice of measured operators, and the impact of QML loss functions. Finally, it will be interesting to study how our method could be used in the early fault-tolerant quantum computers, eventually towards Fault-tolerant Application Scale Quantum (FASQ) computing \cite{FASQ}, by looking at how our methods could be connected to stabilizer codes in quantum error correction. For instance, one can imagine using similar Bayesian optimization methods to optimize syndrome extraction circuits in this so-called partial syndrome measurement scheme \cite{berthusen2024partial}. These directions hold the potential to identify theoretical quantum advantages and address significant practical problems in QML or quantum computing in general.

\newpage
\textit{Acknowledgements.} YL, JC, XT, YZ and JL are supported in part by the University of Pittsburgh, School of Computing and Information, Department of Computer Science. This work is supported in part by NSF grants  \#2154973, \#1910413, \#2334628, and \#2312157. FTC is funded in part by the STAQ project under award NSF Phy-232580; in part by the US Department of Energy Office of Advanced Scientific Computing Research, Accelerated Research for Quantum Computing Program; and in part by the NSF Quantum Leap Challenge Institute for Hybrid Quantum Architectures and Networks (NSF Award 2016136), in part based upon work supported by the U.S. Department of Energy, Office of Science, National Quantum Information Science Research Centers, and in part by the Army Research Office under Grant Number W911NF-23-1-0077. The views and conclusions contained in this document are those of the authors and should not be interpreted as representing the official policies, either expressed or implied, of the U.S. Government. The U.S. Government is authorized to reproduce and distribute reprints for Government purposes notwithstanding any copyright notation herein. FTC is the Chief Scientist for Quantum Software at Infleqtion and an advisor to Quantum Circuits, Inc..

\bibliographystyle{unsrt}
\bibliography{Bibliography.bib}

\clearpage

\pagebreak

\onecolumngrid
\appendix

\vspace{0.5in}

\begin{center}
	{\Large \bf Appendix}
\end{center}

\section{Background}\label{supp:A}

\subsection{Quantum Machine Learning (QML)}
QML is a hybrid quantum-classical framework where variational quantum circuits are trained using classical optimization methods to minimize an objective function, as illustrated in FIG. \ref{fig:QML}. During each training iteration, the feature vectors $(x_1, \ldots, x_n)$ from the classical dataset are first encoded into a quantum state. This state is then conducted by a variational quantum circuit, and the final output state is subsequently measured. With all outcomes of the dataset, we calculate the loss function to evaluate the training efficiency. A classical optimizer is then employed to optimize the variational parameters, like gradient descent. When the loss function converges or achieves small enough, the training ends.
\begin{figure}[H]
    \centering
    \includegraphics[width=0.7\linewidth]{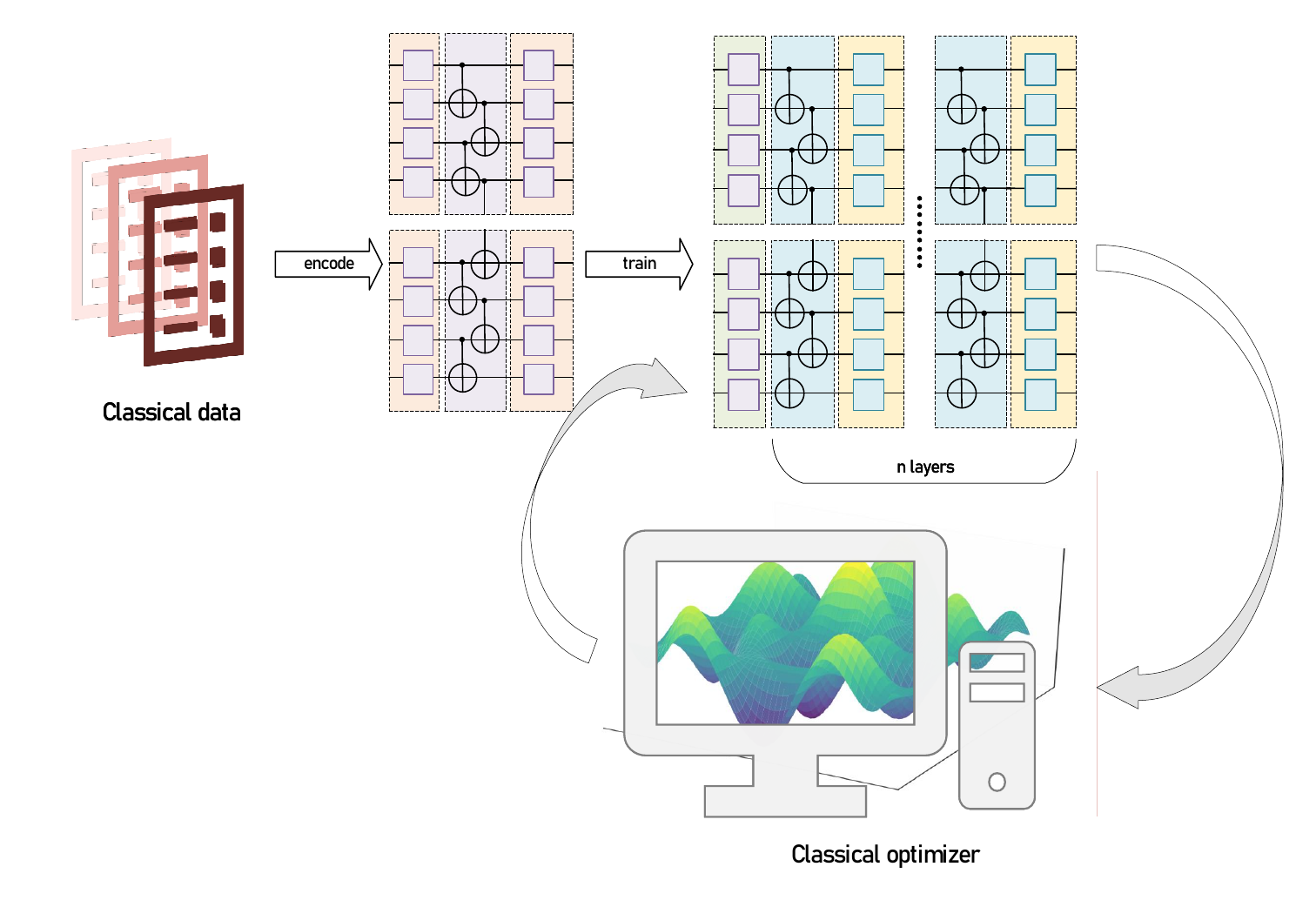}
    \caption{Overview of QML.}
    \label{fig:QML}
\end{figure}
In this study, we employ a \textit{make classification} function to generate a dataset \((x_{i1}, \ldots, x_{in}, y_i)\), where each \(x_{i1}\) represents a feature and \(y_i\) denotes the label. To ensure that the encoding quantum circuit is also a Clifford circuit, the features are transferred into  \(\{0, 1, 2, 3\}\) and subsequently mapped to the angles \(\{0, \pi, \pi/2, -\pi/2\}\), respectively. The loss function adopted is the mean squared error (MSE), which is defined as:
\[
\text{MSE} = \frac{1}{N} \sum_{i=1}^{N} \left( y_i - \hat{y}_i \right)^2,
\]
where $\hat{y}_i$ denotes the predicted label, $y_i$ represents the true label, and N is the size of the dataset.
\subsection{Bayesian optimization}
Bayesian Optimization (BO) is an efficient framework for the global optimization of expensive black-box functions, particularly in scenarios where function evaluations are costly, time-consuming, or require significant computational resources. Unlike traditional optimization methods that rely on gradient, BO is well-suited for optimizing objective functions that are denoted in discrete space, making it suitable for this task in Clifford space.

\begin{figure}
    \centering
    \includegraphics[width=0.7\linewidth]{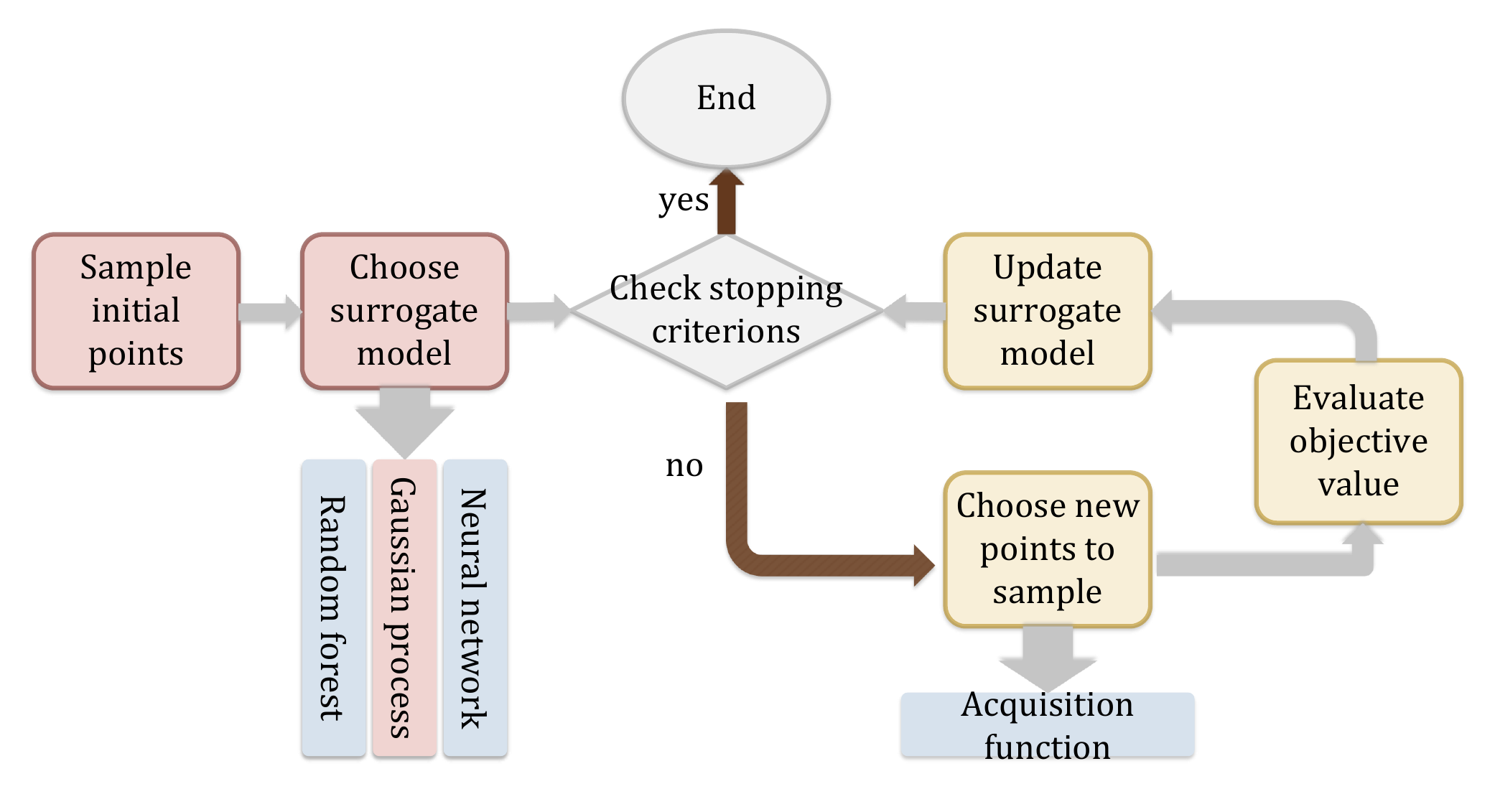}
    \caption{Overview of Bayesian optimization.}
    \label{fig:BO}
\end{figure}

We draw a flow chart for BO in FIG. \ref{fig:BO}. Now we will explain why the sampling is so vital. Assume we use the Gaussian process (GP) as the surrogate model and Expected improvement (EI) as the acquisition function. 
The following discussion is based entirely on these settings. 
The optimization chooses the next points according to EI. Based on mean value $\mu{(x)}$ and standard deviation $\sigma{(x)}$, we can calculate EI by:
$$
\operatorname{EI}(x) = (\mu(x) - f^*) \Phi(Z) + \sigma(x) \phi(Z), \label{eq:EI}
$$
where \( Z = \frac{\mu(x) - f^*}{\sigma(x)} \), \( \Phi(Z) \) is the cumulative distribution function (CDF) of the standard normal distribution, \( \phi(Z) \) is the probability density function (PDF) of the standard normal distribution and $f^*$ is the current optimal value. What would $\mu{(x)}$ and $\sigma{(x)}$ be when sampling points are all trivial? The expressions of $\mu{(x)}$ and $\sigma{(x)}$ are:
\[
\mu(x) = k(X, x)^T [K(X, X) + \sigma_n^2 I]^{-1} y,
\]
\[
\sigma^2(x) = k(x, x) - k(X, x)^T [K(X, X) + \sigma_n^2 I]^{-1} k(X, x),
\]
where $k(x,x)$, $k(X,x)$ and $k(X,X)$ are all kernel function, $y$ is the observed value of sampled data. Setting the noise parameter $\sigma_n^2$ as 0, we could obtain that $\mu(x)$ is 0. The value \( \sigma(x) \) represents the \textbf{uncertainty} in the prediction at the point \( x \). When \( x \) is close to the training data points, the covariance \( k(X, x) \) is relatively large, which results in a smaller variance \( \sigma^2(x) \). Conversely, when \( x \) is far away from the training data points, the variance \( \sigma^2(x) \) increases, indicating greater uncertainty in the prediction. Bringing these results back to the original formula \ref{eq:EI}, we could know $Z$ is 0 and EI becomes:
$$
\operatorname{EI}(x) = \sigma(x) \phi(0)=\frac{1}{\sqrt{2 \pi}}  \sigma(x).
$$
Therefore, according to EI, the optimization could only choose points far away from the existing points until it finds an untrivial point. Considering the efficiency of these two steps, i.e., the sampling is easy to complete parallelly with multi-CPUs, we choose to focus on improving sampling efficiency in this work.

\subsection{The stabilizer formalism}

Classical quantum computing tasks often require exponential resources. However, according to the Gottesman-Knill theorem, Clifford circuits can be efficiently simulated by classical computers in polynomial time. The Gottesman-Knill theorem states that any quantum circuit composed exclusively of Clifford gates with the preparation of qubits in computational basis states and measurements in the computational basis, can be efficiently simulated on a classical computer. To define Clifford operators, we first introduce the Pauli group $P_1$ on a single qubit. The group $P_1$ is generated by the operators $\{i, \sigma_x, \sigma_z\}$ under multiplication, and it contains 16 elements:
\[
\{\pm I, \pm \sigma_x, \pm \sigma_y, \pm \sigma_z, \pm iI, \pm i\sigma_x, \pm i\sigma_y, \pm i\sigma_z\}.
\]

The Pauli group $P_n$ on $n$ qubits is constructed by applying $P_1$ to each qubit. That is, $P_n$ is generated by:
\[
\langle i, \sigma_x^{(1)}, \sigma_x^{(2)}, \ldots, \sigma_x^{(n)}, \sigma_y^{(1)}, \ldots, \sigma_y^{(n)}, \sigma_z^{(1)}, \ldots, \sigma_z^{(n)} \rangle.
\]

The Clifford group corresponding to the $n$-qubit Pauli group $P_n$ is defined as follows:

\textbf{Definition (Clifford group):}  
If for an $n$-qubit operator $g \in SU(2^n)$, it holds that for every element $p \in P_n$,
\[
g p g^{-1} \in P_n,
\]
then $g$ is called an $n$-qubit Clifford operator. All such Clifford operators $g$ form a group structure known as the $n$-qubit Clifford group. Therefore, Clifford gates include the single-qubit gates
$
\sigma_x,  \sigma_z,  \sqrt{i\,\sigma_x},  \sqrt{i\,\sigma_y},  \sqrt{i\,\sigma_z},  H,  S $
and the two-qubit CNOT gate. Among these gates, $S$, $H$, and CNOT can be combined to produce all Clifford gates. It is noteworthy that the $T$ gate does not belong to the Clifford group, yet $H$, $T$, and CNOT together can achieve universal quantum computation.

\section{Proof for Main Theorems}\label{supp:B}

The formal theorems supporting Table \ref{tab:prob} are given in this section.

\begin{theorem}
	For an n-qubit circuit with the reverse linear entanglement structure in FIG.\ref{fig:clifford_ansatz} (composed by one layer of $R_y$ gates and	one layer of reverse linear CNOT gates), where the angles of the $R_y$ gates are restricted to the set $\{0, \pi, \frac{\pi}{2}, - \frac{\pi}{2} \}$, and the observable is a Pauli-$Z$ string, the probability that the measurement	outcome is $1$ or $- 1$ is both $\frac{1}{4}$, while the probability that the outcome is $0$ is $\frac{1}{2}$.                    \label{theorem:1}                       
\end{theorem}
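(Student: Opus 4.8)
The plan is to pass to the Heisenberg picture and reduce the whole statement to a single-qubit computation. First I would record the action of the $R_y$ layer: starting from $|0\cdots0\rangle$, applying $R_y(\theta_i)$ with $\theta_i\in\{0,\pi,\tfrac{\pi}{2},-\tfrac{\pi}{2}\}$ sends qubit $i$ to $|0\rangle,|1\rangle,|+\rangle,|-\rangle$ respectively, so the state $|\phi\rangle=\bigotimes_i|\phi_i\rangle$ entering the CNOT layer is a product of single-qubit $Z$- or $X$-eigenstates. Writing $V$ for the reverse-linear CNOT layer, the measured quantity is $\langle\phi|V^\dagger Z^{\otimes n}V|\phi\rangle$, so it suffices to understand the conjugated observable $\tilde O := V^\dagger Z^{\otimes n}V$.

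The key step is a lemma, proved by induction on the number of CNOTs, stating that $\tilde O$ is a single-qubit Pauli $Z$. I would use the conjugation rules $\mathrm{CNOT}_{c,t}:Z_c\mapsto Z_c,\ Z_t\mapsto Z_cZ_t$. Two observations make the induction run. First, these rules map any product of $Z$'s to a product of $Z$'s, so no $X$ or $Y$ factor is ever generated and $\tilde O$ stays a $Z$-type Pauli throughout. Second, in the reverse-linear ordering the gates can be processed so that the $j$-th conjugation acts with control $j$ on an operator of the form $Z_jZ_{j+1}\cdots Z_n$: the control factor $Z_j$ is left fixed while the target factor $Z_{j+1}\mapsto Z_jZ_{j+1}$ contributes a second $Z_j$, and $Z_jZ_j=I$ annihilates qubit $j$, leaving $Z_{j+1}\cdots Z_n$. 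Iterating this \emph{telescoping} cancellation from the base string $Z_1\cdots Z_n$ collapses the operator one qubit at a time down to the single factor $Z_n$. This is exactly the structural feature distinguishing the reverse-linear layer from the linear one, where the analogous computation instead leaves a $Z$ on every other qubit and hence produces the weak, exponentially small probability.

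With $\tilde O=Z_n$ in hand the count is immediate and factorizes: $\langle\phi|Z_n|\phi\rangle=\langle\phi_n|Z|\phi_n\rangle$, since every other qubit contributes $\langle\phi_i|I|\phi_i\rangle=1$. Hence the outcome depends only on the angle $\theta_n$, taking the value $+1$ for $\theta_n=0$, $-1$ for $\theta_n=\pi$, and $0$ for $\theta_n=\pm\tfrac{\pi}{2}$. Because the angles are sampled uniformly from the four-element set, $\theta_n$ realizes each possibility with probability $\tfrac14$, yielding $P(+1)=P(-1)=\tfrac14$ and $P(0)=\tfrac12$, independently of the remaining $n-1$ angles.

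The step I expect to be the main obstacle is pinning down the exact gate set and application order of the reverse-linear layer from FIG.\ref{fig:clifford_ansatz} and checking that it produces the telescoping pattern above rather than the every-other-qubit pattern; once the correct ordering is fixed, the induction and the final count are routine. A secondary point to state carefully is that the claimed trichotomy concerns the expectation value $\langle Z^{\otimes n}\rangle$, which for a stabilizer state lies in $\{-1,0,1\}$, so "outcome $0$" means a vanishing expectation rather than a projective measurement result.
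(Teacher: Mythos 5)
Your proof is correct, and it takes a genuinely different --- and considerably cleaner --- route than the paper's. The paper works in the Schr\"{o}dinger picture and argues by induction on $n$: it splits into cases according to whether the last qubit(s) leave the $R_y$ layer in a computational-basis state or in $|\pm\rangle$, tracks how sums of basis states transform through the CNOT layer, and uses a pairing argument to show the entangled cases have vanishing expectation. You instead work in the Heisenberg picture: conjugating $Z^{\otimes n}$ through the reverse-linear CNOT layer telescopes it down to the single-qubit operator $Z_n$ (your conjugation rules and the cancellation $Z_j Z_j = I$ are applied correctly, and no signs can arise because CNOT maps $Z$-type strings to $Z$-type strings), after which the expectation factorizes over the product state, only $\theta_n$ matters, and $P(+1)=P(-1)=\tfrac{1}{4}$, $P(0)=\tfrac{1}{2}$ follows from uniformity of the four angles. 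This is essentially the ``more elegant proof using stabilizers'' that the authors explicitly say they leave to future work. Your approach also buys more than the paper's induction: the identical computation under the opposite gate ordering yields the every-other-qubit string $Z_n Z_{n-2}\cdots$, hence the $\tfrac{1}{2^{\lceil n/2 \rceil + 1}}$ of Theorem \ref{theorem:2}, and the Pauli-$X$-string statements follow from the dual conjugation rules, so all four theorems in the table come out of one lemma rather than two separate inductions plus a symmetry argument. The caveats you flag are the right ones and are benign: the identification of ``reverse linear'' with the telescoping order is fixed by the figure (and is in any case forced by the claimed value $\tfrac14$, since the other ordering provably gives the exponentially small answer), and both your proof and the paper's implicitly assume the angles are drawn independently and uniformly from $\{0,\pi,\tfrac{\pi}{2},-\tfrac{\pi}{2}\}$, with outcome $0/\pm 1$ understood as the value of the stabilizer expectation $\langle Z^{\otimes n}\rangle \in \{-1,0,1\}$, exactly as in the paper's definition of $1$-, $-1$-, and $0$-states.
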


\begin{proof}
\begin{figure}[H]
    \centering
    \includegraphics[width=0.75\linewidth]{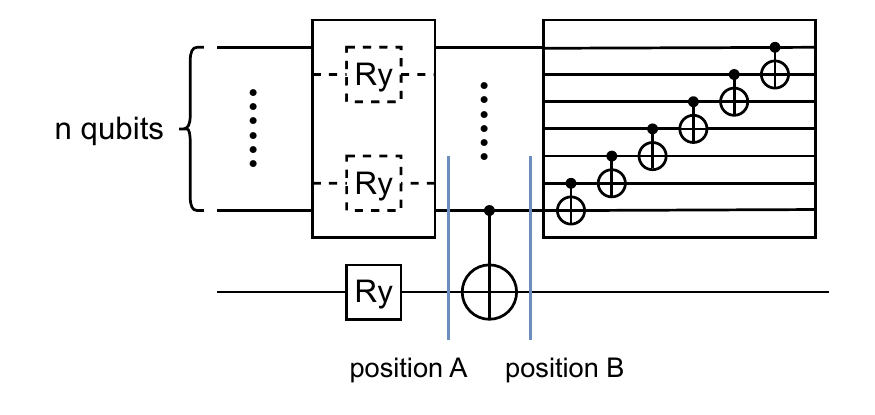}
    \caption{Proof details in Theorem \ref{theorem:1} with Position A and Position B.}
    \label{fig:proof1}
\end{figure}
    We use mathematical induction to prove the theorem:

For the sake of discussion, we define the states in the eigenspace of \( Z \otimes \cdots \otimes Z \) with eigenvalue 1 as \emph{$1$-states}, those in the eigenspace with eigenvalue $-1$ as \emph{$-1$-states}, and the remaining states as \emph{$0$-states}. Since the circuit is a Clifford circuit, the measurement outcome for \emph{$1$-states} is 1, for \emph{$-1$-states} is $-1$, and for all other states is 0.

First, for the case where the number of qubits is 1, i.e., $n=1$, the circuit is equivalent to a single $R_y$ gate. The $R_y$ gate has an equal probability of taking values from the set $\{0, \pi, \frac{\pi}{2}, -\frac{\pi}{2}\}$, and the corresponding states are $|0\rangle$, $|1\rangle$, $|+\rangle$, and $|-\rangle$, respectively. It is easy to see that the above conclusion holds in this case.

Now, suppose the conclusion holds for $n = k$. Then, for $n = k+1$:

(1) If the $n$-th qubit is $|0\rangle$ or $|1\rangle$, then the first $n$ qubits and the $(n+1)$-th qubit are not entangled, so their measurement outcomes are independent. The probability that the final $n+1$ qubits are in the $1$-state corresponds to the case where both the first $n$ qubits and the $(n+1)$-th qubit are either in the $1$-state or both in the $-1$-state.

Here, it is important to note that 
\[
p \text{ }(\text{the first } n \text{ qubits are in the } 1\text{-state}) = p\text{ }(\text{the first } n \text{ qubits are in the } 1\text{-state} | \text{the } n\text{-th qubit is } |0\rangle \text{ or } |1\rangle)
\]

It suffices to state that the $n$-th qubit being $|0\rangle$ or $|1\rangle$ is a necessary condition for the first $n$ qubits to be in the $1$-state.

If the $n$-th qubit is $|0\rangle \pm |1\rangle$:
\[
|\phi_B\rangle = \sum_{i=1}^k |x_{i_1} \dots x_{i_{n-1}}0\rangle \pm \sum_{i=1}^k |x_{i_1} \dots x_{i_{n-1}}1\rangle \rightarrow |\phi_{\text{final}}\rangle = \sum_{i=1}^k |x_{i_1}' \dots x_{i_{n-1}}'x_n'\rangle + \sum_{i=1}^k |x_{i_1}' \dots x_{i_{n-1}}{\overline{x_n'}}\rangle.
\]

If $\sum_{i=1}^k |x_{i_1}' \dots x_{i_{n-1}}'x_n'\rangle$ is in the $1$-state, then $\sum_{i=1}^k |x_{i_1}' \dots x_{i_{n-1}}{\overline{x_n'}}\rangle$ must be in the $-1$-state, and vice versa. If $\sum_{i=1}^k |x_{i_1}' \dots x_{i_{n-1}}'x_n'\rangle$ is in the $0$-state, suppose there are $m$ qubits in the $1$-state and $k-m$ qubits in the $-1$-state. Then, $\sum_{i=1}^k |x_{i_1}' \dots x_{i_{n-1}}{\overline{x_n'}}\rangle$ is also in the $0$-state, with $k-m$ qubits in the $1$-state and $m$ qubits in the $-1$-state. In this case, $|\phi_{\text{final}}\rangle$ has $k$ qubits in the $1$-state and $k$ qubits in the $-1$-state, meaning that $|\phi_{\text{final}}\rangle$ must be in the $0$-state. In conclusion, 
\[
p\text{ }(\text{the first } n \text{ qubits are in the } 1\text{-state}) = p\text{ }(\text{the first } n \text{ qubits are in the } 1\text{-state} | \text{the } n\text{-th qubit is } |0\rangle \text{ or } |1\rangle) = \frac{1}{4}.
\]

Thus, if the $n$-th qubit is in the state $|0\rangle$ or $|1\rangle$, the probability that the $n+1$ qubits are in the $1$-state or $-1$-state is:

\[
\frac{1}{4} \times \frac{1}{4} + \frac{1}{4} \times \frac{1}{4} = \frac{1}{8}.
\]
(2) If the $n$-th qubit is in the state $|0\rangle \pm |1\rangle$, and the $(n+1)$-th qubit is in the state $|0\rangle$ or $|1\rangle$, then the first $n$ qubits and the $(n+1)$-th qubit become entangled. Let the quantum state at Position $A$ be:

\[
|\phi_A\rangle = \sum_{i=1}^k |x_{i_1} \dots x_{i_{n-1}}\rangle \otimes (|0\rangle + |1\rangle) \otimes |x_n\rangle
\]

where $x_{i_j}$ takes values 0 or 1, and $|x_{i_1} \dots x_{i_{n-1}}\rangle$ is either in the $1$-state or the $-1$-state.

The quantum state at Position $B$ is:

\[
|\phi_B\rangle = \sum_{i=1}^k |x_{i_1} \dots x_{i_{n-1}}\rangle \otimes (|0x_n\rangle + |1(x_n)^{\overline{}}\rangle) = \sum_{i=1}^k |x_{i_1} \dots x_{i_{n-1}}0x_{n+1}\rangle + \sum_{i=1}^k |x_{i_1} \dots x_{i_{n-1}}1{\overline{x_{n+1}}}\rangle.
\]

The final quantum state is:

\[
|\phi_{\text{final}}\rangle = \sum_{i=1}^k |x_{i_1}' \dots x_{i_{n-1}}'x_n'x_{n+1}\rangle + \sum_{i=1}^k |x_{i_1}' \dots x_{i_{n-1}}{\overline{x_n'}}{\overline{x_{n+1}}}\rangle.
\]

(a) If both parts are in the $1$-state, then they are both in the $1$-state. If both are in the $-1$-state, they are both in the $-1$-state. In this case, we only need to discuss the probability that $\sum_{i=1}^k |x_{i_1}' \dots x_{i_{n-1}}'x_n'x_{n+1}\rangle$ is in the $1$-state, which is easily obtained as $1/8$.

(b) If $\sum_{i=1}^k |x_{i_1}' \dots x_{i_{n-1}}'x_n'x_{n+1}\rangle$ is in the $0$-state, then $\sum_{i=1}^k |x_{i_1} \dots x_{i_{n-1}}x_nx_{n+1}\rangle$ must also be in the $0$-state. (If $\sum_{i=1}^k |x_{i_1} \dots x_{i_{n-1}}x_nx_{n+1}\rangle$ is in the $1$-state or $-1$-state, since $\sum_{i=1}^k |x_{i_1} \dots x_{i_{n-1}}x_nx_{n+1}\rangle$ is not entangled, it must be that $k=1$.) In this case, we can assume:
\[
\sum_{i=1}^k |x_{i_1} \dots x_{i_{n-1}}x_nx_{n+1}\rangle = |x_1\rangle \otimes |x_2\rangle \otimes \dots \otimes |+\rangle \otimes \dots \otimes |-\rangle \otimes \dots \otimes |x_n\rangle, \quad x_i \in \{0,1\}.
\]

Thus, it is easy to know $k$ is an even number, and there are $k/2$ qubits in the $1$-state and $k/2$ qubits in the $-1$-state. Now, we need to show that in $\sum_{i=1}^k |x_{i_1}' \dots x_{i_{n-1}}'x_n'x_{n+1}\rangle$, the number of $1$-states and $-1$-states is also equal.

Let’s assume that at Position $A$ in FIG. \ref{fig:proof1}, we have $d$ qubits $\{k_1, k_2, \dots, k_d\} $ which are in a superposition state $|+\rangle$ or $|-\rangle$. Then,

$$
\sum_{i=1}^{k} |x_{i_1} \dots x_{i_{n-1}} x_n x_{n+1} \rangle = \sum_{k_j \in \{0, 1\}} |x_1 x_2 \dots x_{k_1} \dots x_{k_2} \dots x_{k_d} \dots x_{n+1} \rangle
$$

It is easy to see that we can pair the \(2^d\) basis states, where each pair differs by only one element. In each pair of basis states, one is $1$-state and the other is $-1$-state. Therefore, after passing through the CNOT layer, in each pair of basis states, there will still be one $1$-state one $-1$-state. Thus, in the expression

$$
\sum_{i=1}^{k} |x_{i_1}' \dots x_{i_{n-1}}' x_n' x_{n+1}' \rangle
$$

the number of $1$-states and  $-1$-states remains equal, meaning that the final state \( |\varphi_{\text{final}}\rangle \) will necessarily be the  $0$-state.

(3) If the \(n\)-th qubit is in the state \( |0\rangle \pm |1\rangle \), and the \( (n+1)\)-th qubit is also in the state \( |0\rangle \pm |1\rangle \), then there is no entanglement between the \(n\)-th and \( (n+1)\)-th qubits, and the state must also be $0$-state.

Thus, in conclusion, the probability that the \(n+1\) qubits are in the $1$-state or  $-1$-state is also \( \frac{1}{4} \).

\end{proof}

\begin{theorem}
	For an n-qubit circuit with the linear entanglement structure in FIG.\ref{fig:clifford_ansatz} \ (composed by one layer of $R_y$ gates	and one layer of \ linear CNOT gates), where the angles of the $R_y$ gates are	restricted to the set $\{0, \pi, \frac{\pi}{2}, - \frac{\pi}{2} \}$, and the	observable is a Pauli-$Z$ string, the probability that the measurement outcome	is $1$ or $- 1$ is both $\frac{1}{2^{\lceil n / 2 + 1 \rceil}}$, while the probability that the outcome is $0$ is $1 - \frac{1}{2^{\lceil n / 2\rceil}}$. 
    \label{theorem:2}
\end{theorem}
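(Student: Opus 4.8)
The plan is to avoid tracking the full $2^n$-dimensional state and instead work in the Heisenberg picture. First I would note that the single layer of $R_y$ gates with angles in $\{0,\pi,\frac{\pi}{2},-\frac{\pi}{2}\}$ turns the input $|0\cdots0\rangle$ into a product state $|\phi\rangle=\bigotimes_{i=1}^{n}|\phi_i\rangle$, where each factor $|\phi_i\rangle$ is drawn independently and uniformly from $\{|0\rangle,|1\rangle,|+\rangle,|-\rangle\}$. Writing $U$ for the unitary of the linear CNOT layer, the quantity of interest is $\langle\phi|U^{\dagger}Z^{\otimes n}U|\phi\rangle$, which (as for Theorem \ref{theorem:1}) is exactly the $\{-1,0,1\}$-valued outcome: it is $+1$ on a $1$-state, $-1$ on a $-1$-state, and $0$ otherwise.

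Next I would compute the conjugated observable $P:=U^{\dagger}Z^{\otimes n}U$ by propagating the $Z$-string through the CNOT chain using the rules $\mathrm{CNOT}_{c,t}:Z_t\mapsto Z_cZ_t$, $Z_c\mapsto Z_c$. The key claim is that for the linear structure this telescopes: repeated cancellations $Z_jZ_j=I$ eliminate every second qubit, so $P$ ends up being $Z$ on exactly $\lceil n/2\rceil$ qubits (those of a fixed parity class) and $I$ on the remaining $\lfloor n/2\rfloor$ qubits, with overall $+$ sign and no $X$ or $Y$ factors. I would establish this by a short induction that tracks the propagation step by step (equivalently, an $n\to n+1$ argument), checking it against the small cases $n=1,2,3,4$ to fix the orientation.

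Finally I would factorize. Because $|\phi\rangle$ is a product state and $P$ a tensor product of $Z$'s and identities, $\langle\phi|P|\phi\rangle=\prod_{i:\,P_i=Z}\langle\phi_i|Z|\phi_i\rangle$. Each factor equals $\pm1$ when $|\phi_i\rangle\in\{|0\rangle,|1\rangle\}$ and $0$ when $|\phi_i\rangle\in\{|+\rangle,|-\rangle\}$, each case having probability $\frac{1}{2}$. Hence the product is nonzero iff all $\lceil n/2\rceil$ active qubits are computational-basis states, an event of probability $2^{-\lceil n/2\rceil}$; this gives $P(\text{outcome }0)=1-\frac{1}{2^{\lceil n/2\rceil}}$. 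Conditioned on being nonzero, flipping the state of a single active qubit $|0\rangle\leftrightarrow|1\rangle$ is a sign-reversing bijection, so $+1$ and $-1$ are equiprobable, yielding $2^{-\lceil n/2\rceil-1}=\frac{1}{2^{\lceil n/2+1\rceil}}$ for each, as claimed.

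The main obstacle is the second step: the final answer is sensitive to the exact CNOT ordering and orientation, so the crux is pinning down that the support of $P$ is precisely every second qubit (hence $\lceil n/2\rceil$) rather than some other count. This is also the single place where the linear versus reverse-linear distinction enters, and it is what makes this case decay exponentially whereas Theorem \ref{theorem:1} stays constant (there the analogous propagation collapses $P$ all the way down to a single $Z$, leaving $\lceil\cdot\rceil$ replaced by $1$ and giving $\frac{1}{4}$). Everything after the structure of $P$ is fixed is routine counting. One could alternatively mirror the inductive $n\to n+1$ argument used for Theorem \ref{theorem:1}, but the Heisenberg propagation is the cleaner stabilizer-theoretic route the authors anticipate.
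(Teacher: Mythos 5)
Your proof is correct, but it takes a genuinely different route from the paper. The paper proves Theorem \ref{theorem:2} in the Schr\"odinger picture: a state-vector induction in steps of two ($n = 2k+1 \to 2k+3$ and $n = 2k \to 2k+2$), with a case analysis over the states of the first two qubits and a bookkeeping argument (via the auxiliary table of two-qubit state transitions) showing how $1$-, $-1$-, and $0$-states recombine after the CNOT layer. You instead work in the Heisenberg picture: back-propagate $Z^{\otimes n}$ through the CNOT chain, observe that the telescoping cancellations $Z_jZ_j = I$ leave a $Z$-string supported on exactly $\lceil n/2\rceil$ qubits, and then reduce everything to independent single-qubit factors $\langle\phi_i|Z|\phi_i\rangle$ plus a sign-flip bijection for the $\pm 1$ symmetry. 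Both are valid; your version is shorter, handles odd and even $n$ uniformly, explains structurally \emph{why} the exponent is $\lceil n/2\rceil$ (it is the support size of the conjugated observable), and makes the linear/reverse-linear duality and the $X$-string counterparts (the paper's Theorems 3 and 4) corollaries of the same computation rather than separate symmetry arguments --- indeed it would extend to the intermediate domain-wall observables the paper only studies numerically. Notably, the paper itself anticipates this: the authors remark that a ``more elegant proof using group theory and stabilizers'' should exist and is left for future work, and your argument is essentially that proof. The one point you must nail down --- and you correctly flag it as the crux --- is that the support of the conjugated Pauli is $\lceil n/2\rceil$ for the specific CNOT ordering and orientation in FIG.~\ref{fig:clifford_ansatz}; this is sensitive to the gate ordering (the reverse-linear ordering collapses the string to a single $Z$, which is exactly what separates the $\frac{1}{4}$ of Theorem \ref{theorem:1} from the exponential decay here), but the induction you sketch for it is routine and the small-case checks fix the orientation unambiguously.
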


\begin{proof}
\begin{figure}[H]
    \centering
    \includegraphics[width=0.75\linewidth]{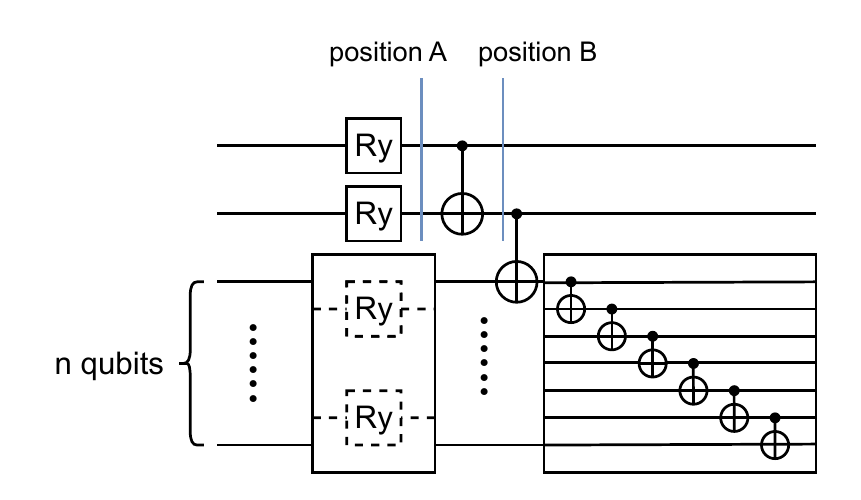}
    \caption{Proof details in Theorem \ref{theorem:2} with Position $A$ and Position $B$. }
    \label{fig:proof2}
\end{figure}
    We will use mathematical induction to complete the proof. Let \( n = 2k + 1 \), in which case the probability of the $1$-state and $-1$-state is \( \frac{1}{2^{k+2}} \). We need to prove that when \( n = 2k + 3 \), the probabilities of the $1$-state and $-1$-state are both \( \frac{1}{2^{k+3}} \).

First, the case \( n = 1 \) is easily verified to be true, so it is omitted.

Assume that for \( n = 2k + 1 \), the probabilities of the $1$-state and $-1$-state are both \( \frac{1}{2^{k+2}} \). Now consider the case when \( n = 2k + 3 \):

(1) If the first two qubits take \(\{|00\rangle, |11\rangle, |10\rangle, |01\rangle\}\), then the first two qubits and the remaining \(n\) qubits are not entangled. In this case, the probability of the system being in the $1$-state is determined by the probability that the first two qubits are in the $1$-state, and the remaining \(n\) qubits are either all in the $1$-state or all in the $-1$-state. For the probability that the first two qubits are in the $1$-state, we can refer to the additional TABLE \ref{tab:added_table}, which is easily found to be \( \frac{1}{4} \). Therefore, the probability that these \(n+1\) qubits are in the $1$-state is:

$$
\frac{1}{8} \cdot \frac{1}{2^{k+2}} + \frac{1}{8} \cdot \frac{1}{2^{k+2}} = \frac{1}{2^{k+4}}
$$

(2) If the first two qubits take \(\{|00\rangle \pm |11\rangle, |01\rangle \pm |10\rangle\}\), first consider the case of \(|00\rangle \pm |11\rangle\), then at position \(B\) in FIG. \ref{fig:proof2}:

$$
|\varphi_B\rangle = \sum_{i=1}^{k} |00 x_{i_3} x_{i_4} \dots x_{i_{n+2}} \rangle + \sum_{i=1}^{k} |11 x_{i_3} x_{i_4} \dots x_{i_{n+2}} \rangle
$$

Then, finally:

$$
|\varphi_{\text{final}}\rangle = \sum_{i=1}^{k} |00 x_{i_3}' x_{i_4}' \dots x_{i_{n+2}}' \rangle + \sum_{i=1}^{k} |11 x_{i_3}'\overline{x_{i_4}'} \dots \overline{x_{i_{n+2}}'} \rangle
$$

Since \(n\) is odd, \(|\varphi_{\text{final}}\rangle\) must be the  $0$-state. Similarly, when considering \(|01\rangle \pm |10\rangle\), \(|\varphi_{\text{final}}\rangle\) is also the $0$-state.

(3) If the first two qubits take the state \( |\psi\rangle \otimes (|0\rangle \pm |1\rangle) \), then at position \( B \):

$$
|\varphi_B\rangle = |\psi\rangle \otimes \left( \sum_{i=1}^{k} |0 x_{i_3} x_{i_4} \dots x_{i_{n+2}} \rangle + \sum_{i=1}^{k} |1 x_{i_3} x_{i_4} \dots x_{i_{n+2}} \rangle \right)
$$

Finally:

$$
|\varphi_{\text{final}}\rangle = |\psi\rangle \otimes \left( \sum_{i=1}^{k} |0 x_{i_3}' x_{i_4}' \dots x_{i_{n+2}}' \rangle + \sum_{i=1}^{k} |1 x_{i_3}' \overline{x_{i_4}'} \dots \overline{x_{i_{n+2}}'} \rangle \right)
$$

Since \( |0 x_{i_3}' x_{i_4}' \dots x_{i_{n+2}}'\rangle \) has an even number of qubits, both \( |0 x_{i_3}' x_{i_4}' \dots x_{i_{n+2}}'\rangle \) and \( |1 x_{i_3}' \overline{x_{i_4}'} \dots \overline{x_{i_{n+2}}'} \rangle \) will either be in the $1$-state or the $-1$-state.

Therefore, the probability of the final state being in the $1$-state is:

$$
\frac{1}{4} \cdot \frac{1}{2} \cdot \frac{1}{2^{k+2}} + \frac{1}{4} \cdot \frac{1}{2} \cdot \frac{1}{2^{k+2}} = \frac{1}{2^{k+4}}
$$

In conclusion, when \( n = 2k + 3 \), the probability of the system being in the $1$-state is:

$$
\frac{1}{2^{k+4}} + \frac{1}{2^{k+4}} = \frac{1}{2^{k+3}}
$$

Thus, the case where \(n\) is odd is proven. Below is the case where \(n\) is even:

When \(n = 2\), the conclusion is easily derived, so it is omitted. When \(n = 2k\), the probabilities of the $1$-state and $-1$-state are both \(\frac{1}{2^{k+1}}\). Now, consider the case when \(n = 2k + 2\):

(1) If the first two qubits take the states \(\{|00\rangle, |11\rangle, |10\rangle, |01\rangle\}\), then the first two qubits and the remaining \(n\) qubits are not entangled. In this case, the probability of $n+2$ qubits being in the $1$-state is determined by the probability that the first two qubits are in the $1$-state and the remaining \(n\) qubits are either all in the $1$-state or all in the $-1$-state. For the probability that the first two qubits are in the $1$-state, we can refer to the additional table\ref{add_table}, which is easily found to be \(\frac{1}{4}\). Therefore, the probability that these \(n+1\) qubits are in the $1$-state is:

$$
\frac{1}{8} \cdot \frac{1}{2^{k+2}} + \frac{1}{8} \cdot \frac{1}{2^{k+2}} = \frac{1}{2^{k+4}}
$$

(2) If the first two qubits take \(\{|00\rangle \pm |11\rangle, |01\rangle \pm |10\rangle\}\), first consider the case of \(|00\rangle \pm |11\rangle\), then at position \(B\) in FIG. \ref{fig:proof2}:

$$
|\varphi_B\rangle = \sum_{i=1}^{k} |00 x_{i_3} x_{i_4} \dots x_{i_{n+2}} \rangle + \sum_{i=1}^{k} |11 x_{i_3} x_{i_4} \dots x_{i_{n+2}} \rangle
$$
Finally,

$$
|\varphi_{\text{final}}\rangle = \sum_{i=1}^{k} |00 x_{i_3}' x_{i_4}' \dots x_{i_{n+2}}' \rangle + \sum_{i=1}^{k} |11 x_{i_3}' \overline{x_{i_4}'} \dots \overline{x_{i_{n+2}}'} \rangle
$$

Since \(n\) is even, both \( |00 x_{i_3}' x_{i_4}' \dots x_{i_{n+2}}'\rangle \) and \( |11 x_{i_3}' \overline{x_{i_4}'} \dots \overline{x_{i_{n+2}}'}\rangle \) will either be in the $1$-state or in the $-1$-state. Therefore, the probability of the final state being in the $1$-state is:

$$
\frac{1}{4} \cdot \frac{1}{2} \cdot \frac{1}{2^{k+2}} + \frac{1}{4} \cdot \frac{1}{2} \cdot \frac{1}{2^{k+2}} = \frac{1}{2^{k+4}}
$$

(3) If the first two qubits take the state \( |\psi\rangle \otimes (|0\rangle \pm |1\rangle) \), then at position \(B\):

$$
|\varphi_B\rangle = |\psi\rangle \otimes \left( \sum_{i=1}^{k} |0 x_{i_3} x_{i_4} \dots x_{i_{n+2}} \rangle + \sum_{i=1}^{k} |1 x_{i_3} x_{i_4} \dots x_{i_{n+2}} \rangle \right)
$$

Finally,

$$
|\varphi_{\text{final}}\rangle = |\psi\rangle \otimes \left( \sum_{i=1}^{k} |0 x_{i_3}' x_{i_4}' \dots x_{i_{n+2}}' \rangle + \sum_{i=1}^{k} |1 x_{i_3}' \overline{x_{i_4}'} \dots \overline{x_{i_{n+2}}'} \rangle \right)
$$

Since \( |0 x_{i_3}' x_{i_4}' \dots x_{i_{n+2}}'\rangle \) has an odd number of qubits, the final state \( |\varphi_{\text{final}}\rangle \) must be the  $0$-state.

In conclusion, when \(n = 2k + 2\), the probability of the $n+2$ qubits being in the $1$-state is:

$$
\frac{1}{2^{k+4}} + \frac{1}{2^{k+4}} = \frac{1}{2^{k+3}}
$$

\end{proof}

\begin{table}[h!]
\centering
\begin{tabular}{c|c|c}
\hline
&\textbf{Position $A$} & \textbf{Position $B$} \\
\hline
\text{$1$-state} & 
$|00\rangle, |11\rangle$ & 
$|00\rangle, |11\rangle, \frac{1}{\sqrt{2}}|00\rangle \pm \frac{1}{\sqrt{2}}|11\rangle$ \\
\hline
\text{$-1$-state} & 
$|01\rangle, |10\rangle$ & 
$|01\rangle, |10\rangle, \frac{1}{\sqrt{2}}|01\rangle \pm \frac{1}{\sqrt{2}}|10\rangle$ \\
\hline
\text{$0$-state} & 
$\frac{1}{\sqrt{2}}(|00\rangle \pm |01\rangle), \frac{1}{\sqrt{2}}(|10\rangle \pm |11\rangle), \frac{1}{\sqrt{2}}(|00\rangle \pm |10\rangle), \frac{1}{\sqrt{2}}(|01\rangle \pm |11\rangle)$ & 
$\frac{1}{\sqrt{2}}(|00\rangle \pm |01\rangle), \frac{1}{\sqrt{2}}(|10\rangle \pm |11\rangle), \frac{1}{2}(|0\rangle \pm |1\rangle) \otimes (|0\rangle \pm |1\rangle)$ \\
\hline
\end{tabular}

\caption{State transitions from Position $A$ to Position $B$ for 2 qubits.}
\label{tab:added_table}
\end{table}

Now, by symmetry, we can naturally derive the following two further theorems:

\begin{theorem}
	For an $n$-qubit circuit with reverse linear entanglement structure in FIG.\ref{fig:clifford_ansatz} (composed by one layer of $R_y$ gates and	one layer of reverse linear CNOT gates), where the angles of the $R_y$ gates are restricted to the set $\{0, \pi, \frac{\pi}{2}, - \frac{\pi}{2} \}$, and the observable is a Pauli-$X$ string, the probability that the measurement	outcome is $1$ or $- 1$ is both $\frac{1}{2^{\lceil n / 2 + 1 \rceil}}$, while the probability that the outcome is $0$ is $1 - \frac{1}{2^{\lceil n / 2\rceil}}$.
\end{theorem}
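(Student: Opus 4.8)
The plan is to deduce this theorem from Theorem \ref{theorem:2} (linear entanglement with a Pauli-$Z$ string, which already yields $\frac{1}{2^{\lceil n/2+1\rceil}}$) through a Hadamard duality, rather than repeating the induction. The starting observation is that, because the circuit is Clifford and the observable is a Pauli string, the state $U|0\rangle$ is a stabilizer state and the expectation $\langle 0|U^\dagger O U|0\rangle$ takes only the values $+1$, $-1$, or $0$; these three cases correspond exactly to the \emph{$1$-states}, \emph{$-1$-states}, and \emph{$0$-states} of the earlier proofs. Hence computing the three probabilities amounts to comparing the distributions of this expectation value, taken over the uniform choice of angles in $\{0,\pi,\frac{\pi}{2},-\frac{\pi}{2}\}$, for the reverse-linear and linear circuits.

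Next I would record the three conjugation identities that drive the duality: $H X H = Z$ and $H Z H = X$, so that $H^{\otimes n}$ interchanges the Pauli-$X$ string and the Pauli-$Z$ string; $H R_y(\theta) H = R_y(-\theta)$, so the $R_y$ layer is sent to an $R_y$ layer with negated angles, and the angle set $\{0,\pi,\frac{\pi}{2},-\frac{\pi}{2}\}$ is invariant under $\theta\mapsto-\theta$ up to the global phase $R_y(-\pi)=-R_y(\pi)$, which does not affect any measurement; and $(H\otimes H)\,\mathrm{CNOT}_{c\to t}\,(H\otimes H)=\mathrm{CNOT}_{t\to c}$, so conjugation by $H^{\otimes n}$ flips the control and target of every CNOT. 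The key geometric claim to pin down is that flipping control and target throughout the reverse-linear CNOT layer produces exactly the linear CNOT layer of FIG.\ \ref{fig:clifford_ansatz}; this is the precise sense in which the two entanglement structures are \emph{dual}, as anticipated in the Discussion.

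With these identities I would show that the random state $H^{\otimes n}U_{\mathrm{rev}}|0\rangle$ has the same law as $U_{\mathrm{lin}}|0\rangle$. Writing $U_{\mathrm{rev}}=C_{\mathrm{rev}}R$ with $R=\bigotimes_j R_y(\theta_j)$, conjugation turns $C_{\mathrm{rev}}$ into $C_{\mathrm{lin}}$ and turns each factor $R_y(\theta_j)|0\rangle\in\{|0\rangle,|1\rangle,|+\rangle,|-\rangle\}$ into $H R_y(\theta_j)|0\rangle$, which ranges over the same four states since $H$ merely permutes them ($|0\rangle\leftrightarrow|+\rangle$, $|1\rangle\leftrightarrow|-\rangle$); uniformity of the angles is therefore preserved. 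Consequently
\[
\langle 0|\,U_{\mathrm{rev}}^\dagger\, X^{\otimes n}\, U_{\mathrm{rev}}\,|0\rangle=\langle\psi|\,Z^{\otimes n}\,|\psi\rangle,\qquad |\psi\rangle:=H^{\otimes n}U_{\mathrm{rev}}|0\rangle,
\]
and, as a random variable over the angle choices, the right-hand side has the same distribution as $\langle 0|\,U_{\mathrm{lin}}^\dagger\,Z^{\otimes n}\,U_{\mathrm{lin}}\,|0\rangle$. Theorem \ref{theorem:2} then supplies the required probabilities $\tfrac{1}{2^{\lceil n/2+1\rceil}}$ for the outcomes $+1$ and $-1$ and $1-\tfrac{1}{2^{\lceil n/2\rceil}}$ for the outcome $0$.

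The main obstacle I anticipate lies entirely in the geometric bookkeeping of the second paragraph: verifying that $H^{\otimes n}$-conjugation sends the reverse-linear layer \emph{exactly} to the linear layer requires matching both the control/target flip and the temporal ordering of the CNOTs (gates sharing a qubit do not commute), and confirming that this agrees with the precise wiring drawn in FIG.\ \ref{fig:clifford_ansatz}. Once that identification is secured, every remaining step is formal, and the symmetry argument replaces the inductive computation of Theorem \ref{theorem:1} and Theorem \ref{theorem:2} entirely.
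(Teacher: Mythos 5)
Your proposal is correct and is essentially the paper's own argument: the paper derives this theorem from Theorem \ref{theorem:2} purely ``by symmetry,'' which is exactly the $H^{\otimes n}$-conjugation duality ($HXH=Z$, $HR_y(\theta)H=R_y(-\theta)$, CNOT control/target flip) that you spell out. Your write-up in fact supplies the details (including the angle-set invariance and the wiring check flagged in your last paragraph) that the paper leaves implicit in its one-line symmetry claim.
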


\begin{theorem}
	For an $n$-qubit circuit with the linear entanglement structure in FIG.\ref{fig:clifford_ansatz} \ (composed by one layer of $R_y$ gates	and one layer of \ linear CNOT gates), where the angles of the $R_y$ gates are	restricted to the set $\{0, \pi, \frac{\pi}{2}, - \frac{\pi}{2} \}$, and the	observable is a Pauli-$X$ string, the probability that the measurement outcome	is $1$ or $- 1$ is both $\frac{1}{4}$, while the	probability that the outcome is $0$ is $\frac{1}{2}$. 
	
\end{theorem}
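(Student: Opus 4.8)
The plan is to obtain this statement not by a fresh induction but by deriving it from Theorem \ref{theorem:1} through the $X \leftrightarrow Z$ duality announced in the main text, realized concretely as conjugation by the global Hadamard $V = H^{\otimes n}$. The starting observation is that $H \sigma_x H = \sigma_z$, so $V\,(\sigma_x^{\otimes n})\,V = \sigma_z^{\otimes n}$, turning the Pauli-$X$ observable of this theorem into the Pauli-$Z$ observable of Theorem \ref{theorem:1}. Since for a stabilizer state the classification of a Pauli measurement into a $1$-state, $-1$-state, or $0$-state is preserved under any unitary carrying one observable to the other, it suffices to exhibit a measure-preserving bijection on the angle assignments that identifies the two circuits.

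Next I would track what $V$-conjugation does to each ingredient of the linear-entanglement circuit. On the two-qubit gates, the identity $(H \otimes H)\,\mathrm{CNOT}_{c \to t}\,(H \otimes H) = \mathrm{CNOT}_{t \to c}$ flips every control/target pair, so $V U_{\mathrm{lin}} V$ reproduces the reverse-linear CNOT layer of Theorem \ref{theorem:1} (this is precisely the structural duality claimed in the Discussion). On the rotation layer, $H \sigma_y H = -\sigma_y$ gives $H R_y(\theta) H = R_y(-\theta)$, so each angle is negated. The one remaining mismatch is the initial state: $V|0\cdots0\rangle = |+\cdots+\rangle$ rather than $|0\cdots0\rangle$. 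I would absorb this using $|+\rangle = R_y(\tfrac{\pi}{2})|0\rangle$ together with the additivity $R_y(\alpha)R_y(\beta) = R_y(\alpha+\beta)$, so that the negated layer acting on $|+\cdots+\rangle$ equals a fresh $R_y$ layer with angles $\phi_i = \tfrac{\pi}{2} - \theta_i$ acting on $|0\cdots0\rangle$.

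Assembling these, conjugation by $V$ sends the linear/$X$ circuit with angles $\{\theta_i\}$ to the reverse-linear/$Z$ circuit of Theorem \ref{theorem:1} with angles $\{\tfrac{\pi}{2}-\theta_i\}$, matching outcome by outcome. The final step is to check that $\theta \mapsto \tfrac{\pi}{2}-\theta$ permutes the admissible set $\{0,\pi,\tfrac{\pi}{2},-\tfrac{\pi}{2}\}$: indeed it swaps $0 \leftrightarrow \tfrac{\pi}{2}$ and $\pi \leftrightarrow -\tfrac{\pi}{2}$, so it is a bijection and preserves the uniform distribution over angle tuples. Hence the outcome distribution of the linear/$X$ circuit coincides identically with that of Theorem \ref{theorem:1}, giving probabilities $\tfrac14$, $\tfrac14$, $\tfrac12$ for outcomes $1$, $-1$, $0$.

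The main obstacle I anticipate is not any of these algebraic identities, which are routine, but pinning down the CNOT-layer step against the figure's precise convention: the equality $V U_{\mathrm{lin}} V = U_{\mathrm{rev}}$ requires that reversing each gate's direction under $H$-conjugation — which preserves the left-to-right application order — genuinely coincides with the figure's reverse-linear layer rather than with an order-reversed variant. I would settle this by reading off the exact gate sequence from FIG.~\ref{fig:clifford_ansatz} and verifying the two layers agree gate-for-gate; once that identification is fixed, the remainder of the argument is immediate and avoids re-running the induction of Theorem \ref{theorem:1}.
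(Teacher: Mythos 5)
Your proposal is correct and is essentially the paper's own route: the paper obtains this theorem from Theorem \ref{theorem:1} by a one-line appeal to the $X\leftrightarrow Z$ / linear$\leftrightarrow$reverse-linear symmetry, and your $H^{\otimes n}$-conjugation argument (CNOT direction flip, $R_y(\theta)\mapsto R_y(-\theta)$, and the angle bijection $\theta\mapsto \tfrac{\pi}{2}-\theta$ absorbing $V\ket{0\cdots 0}=\ket{+\cdots+}$) is exactly that symmetry made rigorous. The CNOT-ordering convention you flag is indeed the one point left implicit in the paper as well, and it is resolved in your favor since the paper's own duality claim (and its derivation of this theorem from Theorem \ref{theorem:1}) presupposes that the two layers in FIG.~\ref{fig:clifford_ansatz} are precisely Hadamard conjugates of one another.
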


\section{Experimental details}\label{supp:C}
In this section, we will give further experimental details about our work.

First, we will show the results for multi-layer ans\"{a}tze, the probability under all possible observables under small-scale qubit systems, sampled points under different sizes of the dataset with 1000 qubits, and sampled points with 10000 qubits. Our device can support experiments involving up to 10000 qubits maximally.

In the main text we state that the maximal probability for nontrivial sampling is $\frac{1}{4}$. It is not a mathematical conclusion but derived from various experiments under all possible observables as shown in FIG \ref{fig:prob_all_observables}.
\begin{figure}[H]
    \centering
    \includegraphics[width=0.4\linewidth]{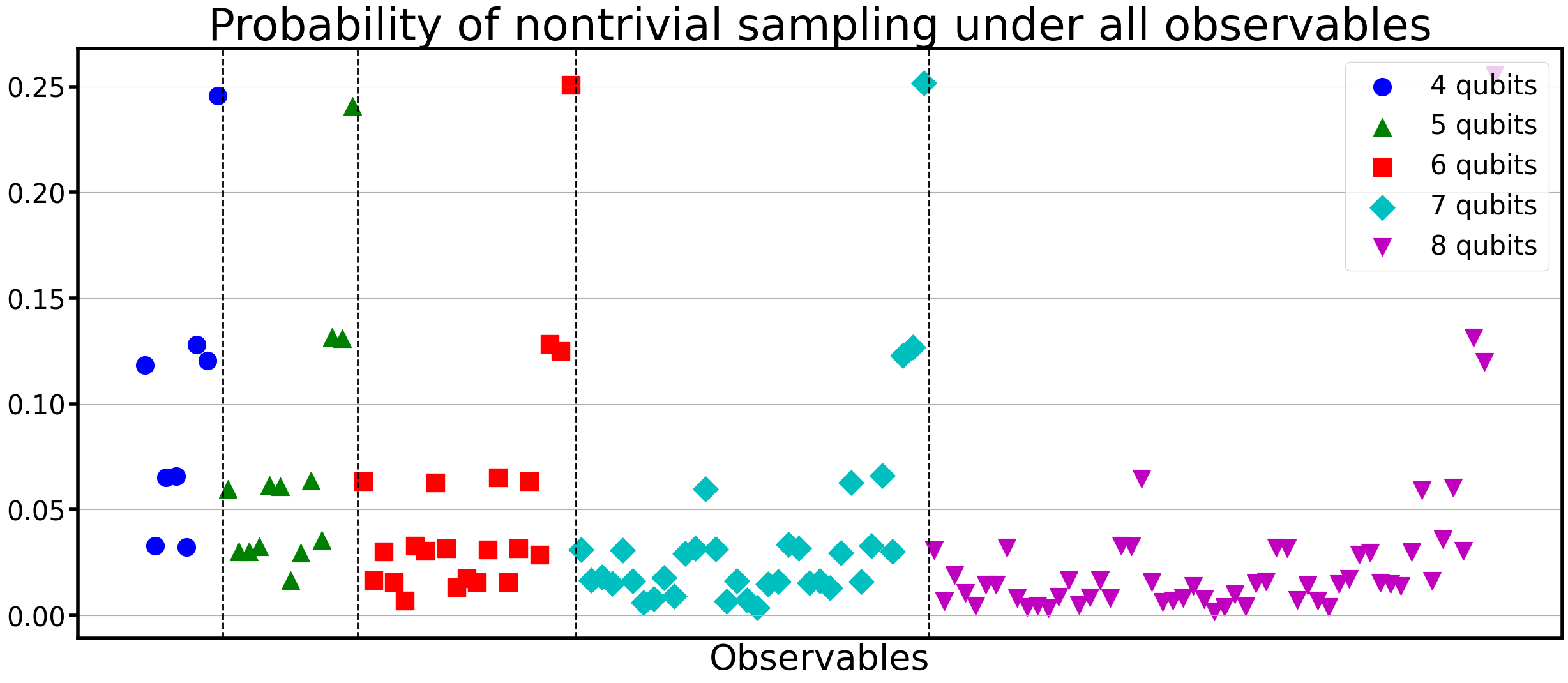}
    \caption{Probability under all possible observables.}
    \label{fig:prob_all_observables},
\end{figure}

In practical applications, a multi-layer ans\"{a}tz with about four layers is commonly employed to enhance the expressive power of quantum circuits, compared to a single-layer ans\"{a}tz. 
Through experiments with multi-layer ans\"{a}tz, we observed that the number of layers does indeed impact sampling efficiency as shown in FIG. \ref{fig:multiansatz}. 
However, while this probability decreases as the number of layers increases, we argue that such a decrease does not hinder the algorithm's scalability to large-scale qubit systems. 
This is because, in practical applications, it is unnecessary to use an excessive number of layers, such as 100 or even 1000 layers. In typical QML or VQA applications, the decrease in sampling efficiency is always limited within circuits of 3 to 10 layers, decreasing from approximately 0.25 down to around 0.13. Such a probability is entirely acceptable for our tasks and does not significantly compromise its effectiveness.
\begin{figure}
    \centering
    \includegraphics[width=0.5\linewidth]{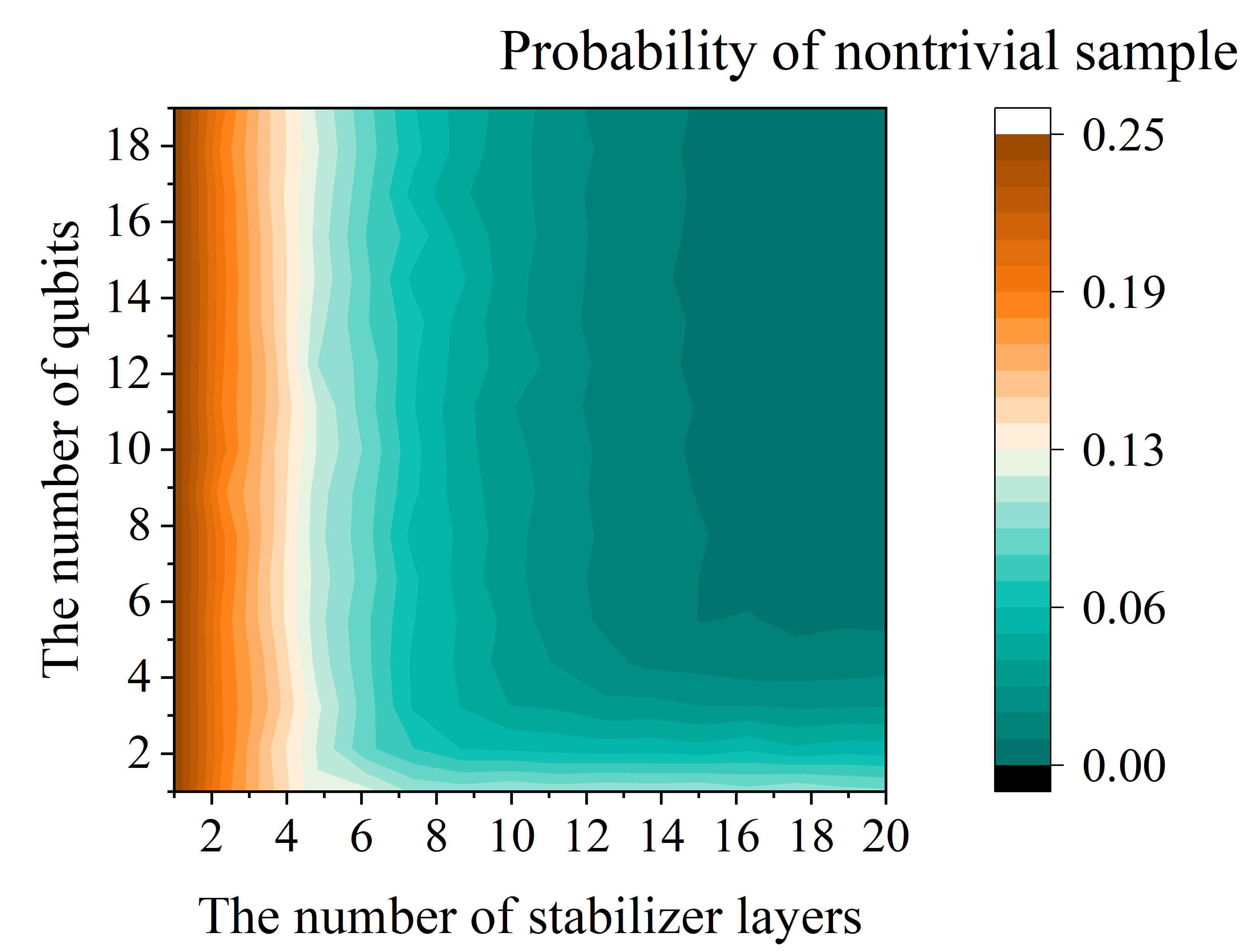}
    \caption{Probability of nontrivial sampling with multi-layer ans\"{a}tze.}
    \label{fig:multiansatz}
\end{figure}

Moreover, in the main text, we provided only a single figure illustrating the relationship between loss, variance, and dataset size. Here, we will present a detailed analysis of the sampling results across various datasets in FIG. \ref{fig:1kqubits_all}. All experiments are conducted with 3 layers ans\"{a}tze.
\begin{figure}[htbp]
    \centering
    \begin{subfigure}[b]{0.32\textwidth}
        \centering
        \includegraphics[width=0.9\textwidth]{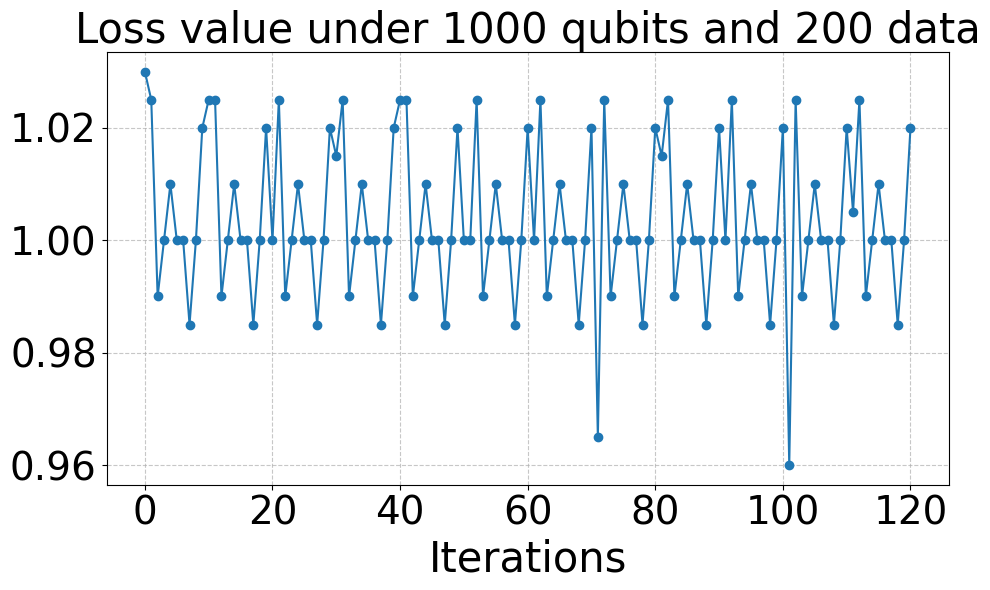}
        \caption{1000 qubits, 200 data.}
        \label{fig:1k200}
    \end{subfigure}
    \begin{subfigure}[b]{0.32\textwidth}
        \centering
        \includegraphics[width=0.9\textwidth]{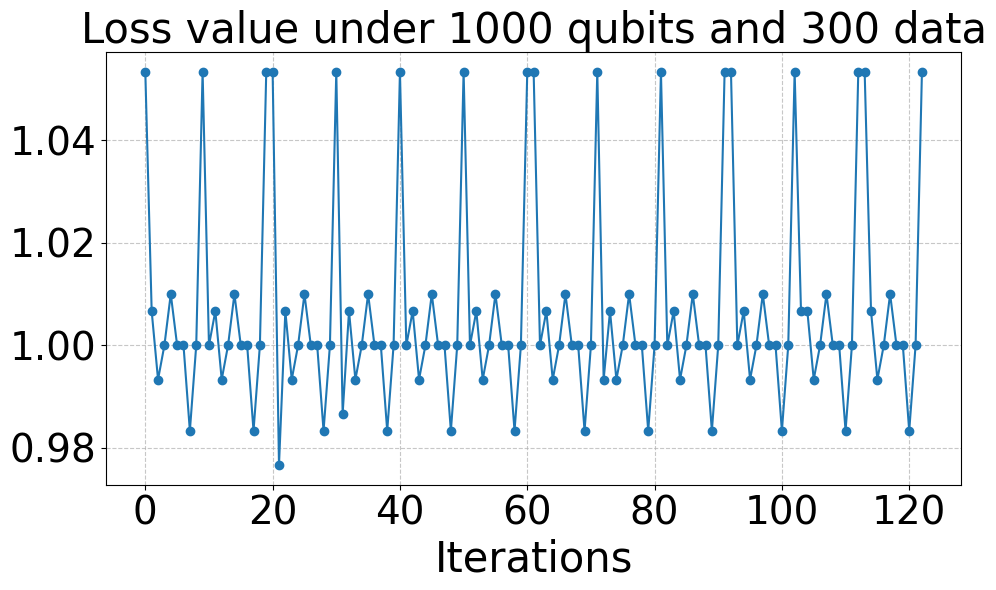}
        \caption{1000 qubits, 300 data.}
        \label{fig:1k300}
    \end{subfigure}
    \begin{subfigure}[b]{0.32\textwidth}
        \centering
        \includegraphics[width=0.9\textwidth]{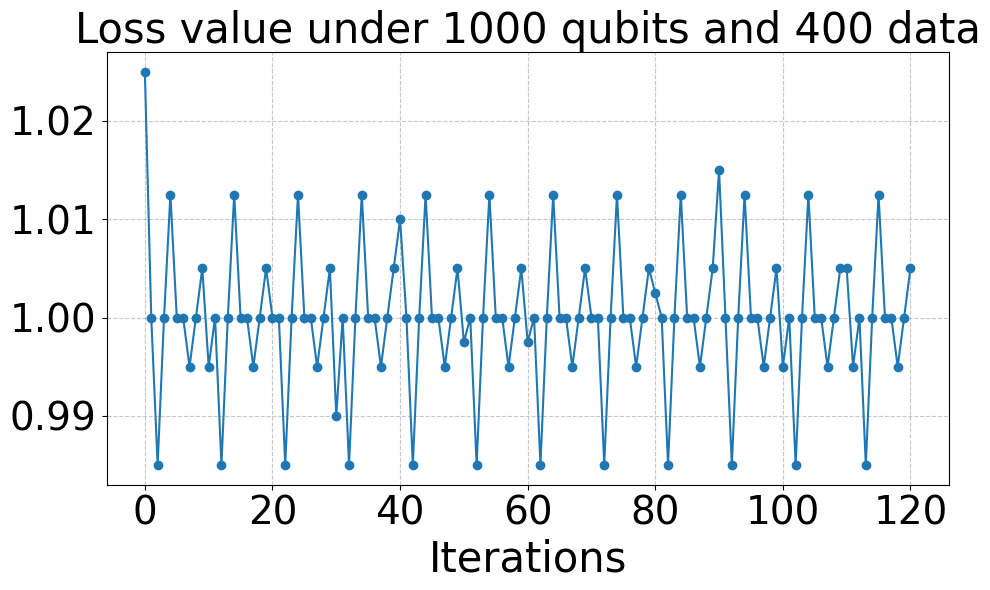}
        \caption{1000 qubits, 400 data}
        \label{fig:1k400}
    \end{subfigure}

    \begin{subfigure}[b]{0.32\textwidth}
        \centering
        \includegraphics[width=0.9\textwidth]{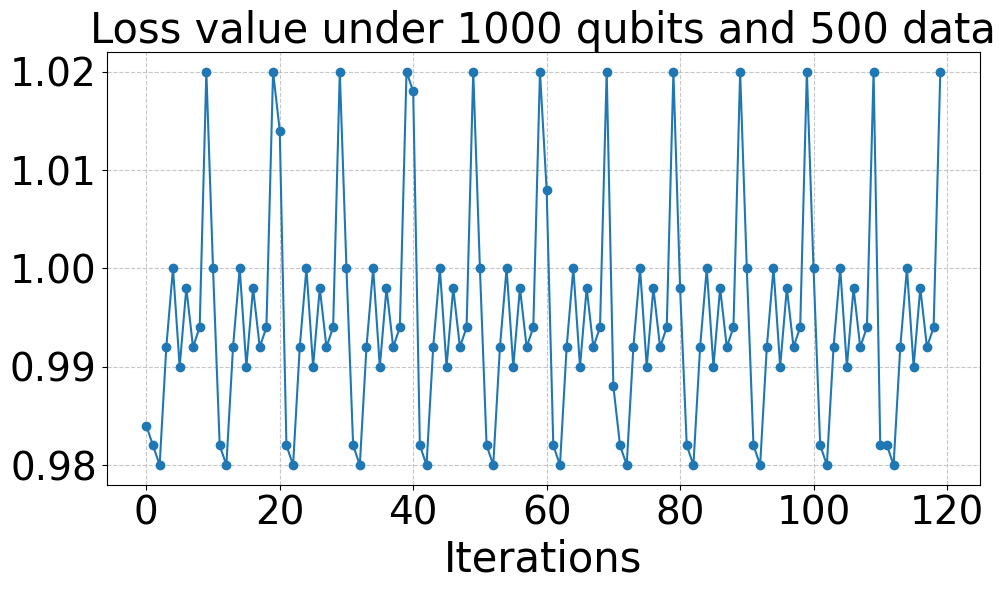}
        \caption{1000 qubits, 500 data.}
        \label{fig:1k500}
    \end{subfigure}
    \begin{subfigure}[b]{0.32\textwidth}
        \centering
        \includegraphics[width=0.9\textwidth]{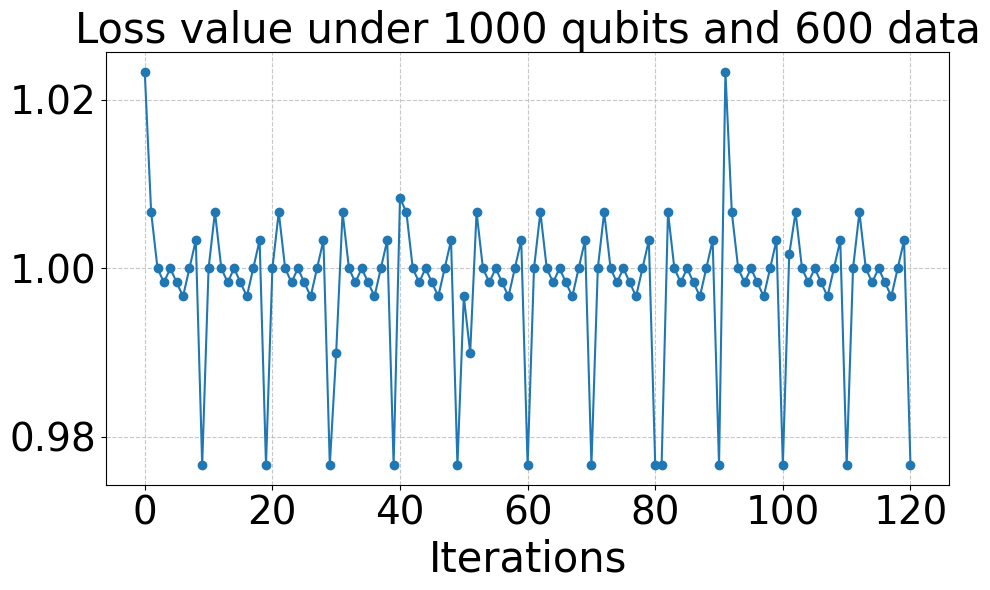}
        \caption{1000 qubits, 600 data.}
        \label{fig:1k600}
    \end{subfigure}
    \begin{subfigure}[b]{0.32\textwidth}
        \centering
        \includegraphics[width=0.9\textwidth]{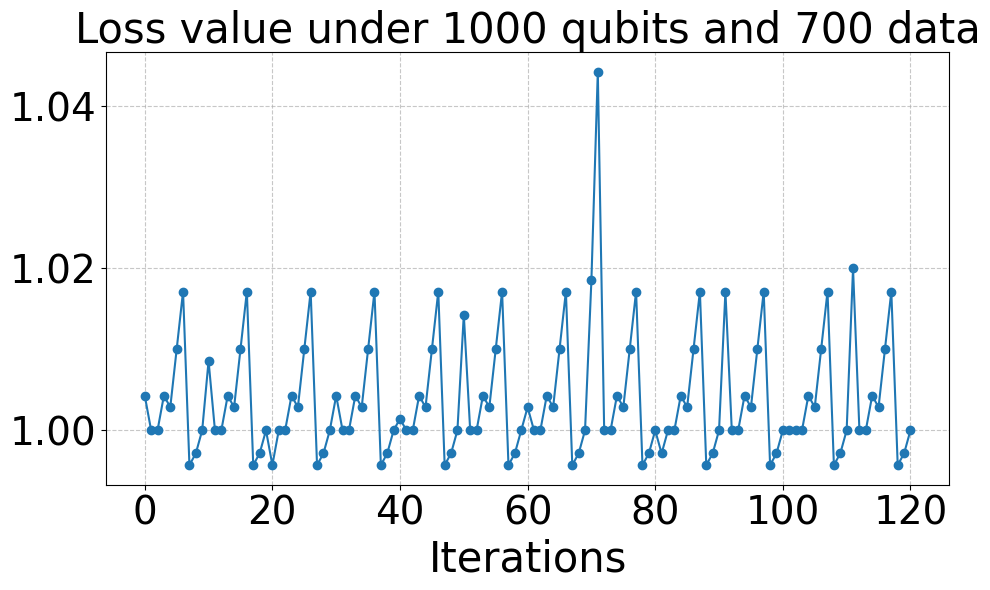}
        \caption{1000 qubits, 700 data.}
        \label{fig:1k700}
    \end{subfigure}

    \begin{subfigure}[b]{0.32\textwidth}
        \centering
        \includegraphics[width=0.9\textwidth]{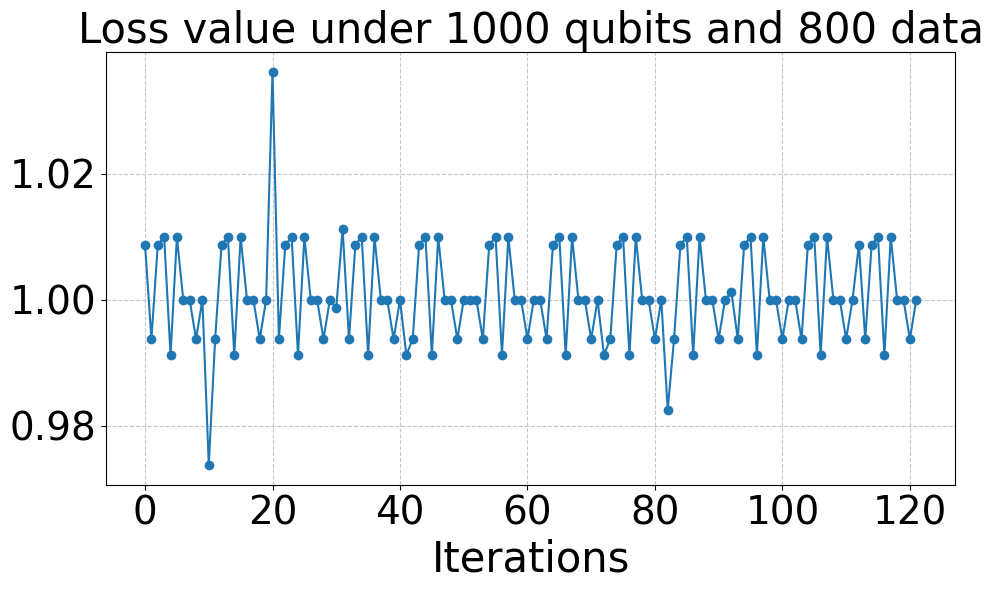}
        \caption{1000 qubits, 800 data.}
        \label{fig:1k800}
    \end{subfigure}
    \begin{subfigure}[b]{0.32\textwidth}
        \centering
        \includegraphics[width=0.9\textwidth]{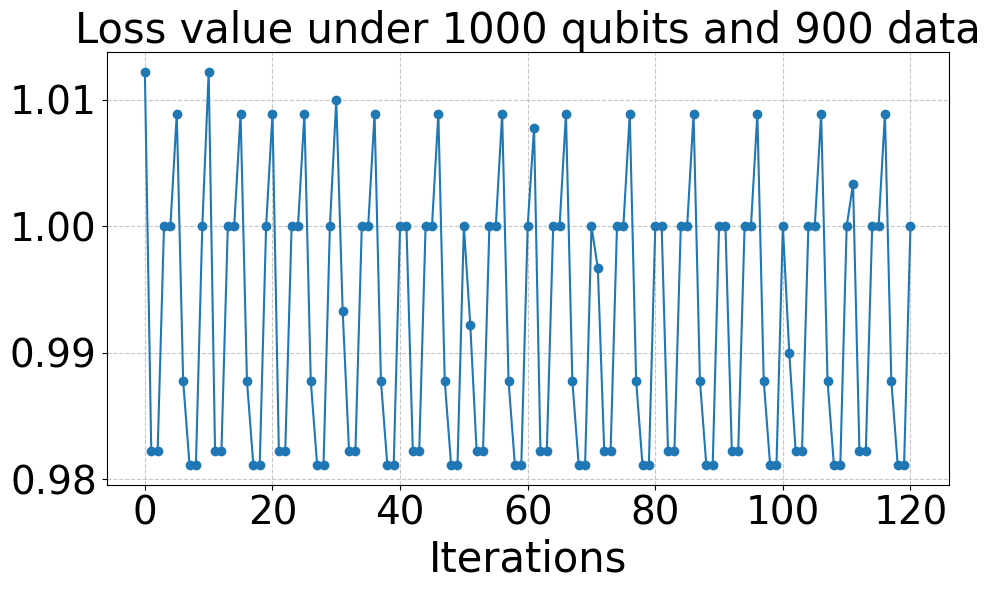}
        \caption{1000 qubits, 900 data.}
        \label{fig:1k900}
    \end{subfigure}
    \begin{subfigure}[b]{0.32\textwidth}
        \centering
        \includegraphics[width=0.9\textwidth]{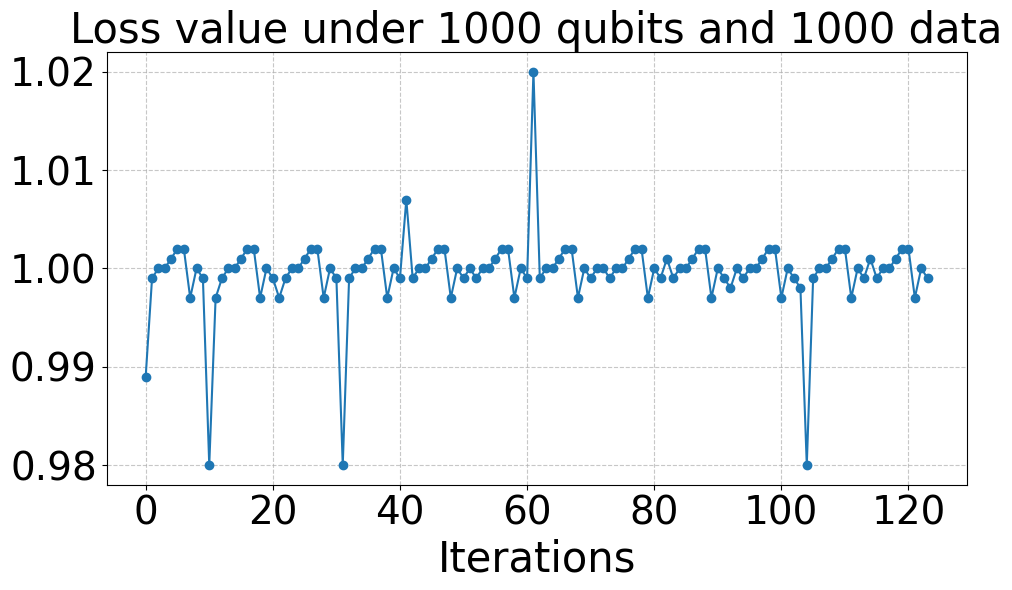}
        \caption{1000 qubits, 1000 data.}
        \label{fig:1k1000}
    \end{subfigure}

    \caption{Results for datasets of varying sizes.}
    \label{fig:1kqubits_all}
\end{figure}

\end{document}